\newtheorem{proposition}{Proposition}[section]
\newtheorem{lemma}[proposition]{Lemma}
\title{\boldmath Pseudo-Riemannian Structures in Pati-Salam models}
\author{A. Bochniak,}
\author{T.E. Williams,}
\author{P. Zalecki}
\affiliation{Institute of Theoretical Physics, Jagiellonian University in Kraków,\\ prof. {\L}ojasiewicza 11, 30-348 Kraków, Poland}
\emailAdd{arkadiusz.bochniak@doctoral.uj.edu.pl}
\emailAdd{t.e.williams@doctoral.uj.edu.pl}
\emailAdd{pawel.zalecki@doctoral.uj.edu.pl}
\abstract{We discuss the role of the pseudo-Riemannian structure of the finite spectral triple for the family of Pati-Salam models. We argue that its existence is a very restrictive condition that separates leptons from quarks, and restricts the whole family of Pati-Salam models into the class of generalized Left-Right Symmetric Models.}
\keywords{Mathematical Methods of Physics, Symmetries}
\begin{document} 
\maketitle
\flushbottom
\section{Introduction}
The Standard Model of Particle Physics is the preeminent theory of fundamental constituents and their interactions. It has been experimentally tested and verified to be accurate with a very high degree of precision. Nevertheless, there are several open questions which, to date, the Standard Model has left unresolved.  Among them, several are related to the masses of neutrinos and the seesaw mechanism \cite{pr}, and others to baryon asymmetry \cite{sar}. Nowadays, there is also a tremendous set of cosmological data \cite{jarosik} which suggest the existence of dark matter. Several attempts to explain the aforementioned questions have already been proposed. Due to the proven success of the Standard Model, most of them are extensions thereof and are known as theories which go {\it Beyond the Standard Model}.  An interesting one, which has already been under consideration for several years and intensively studied by several physicists, is the model introduced by J.C. Pati and A. Salam \cite{ps}. 

The Pati-Salam model is a Yang-Mills-type model based on the 
$\mathrm{SU(2)}_R\times \mathrm{SU(2)}_L\times \mathrm{SU(4)}$
gauge group. It extends the usual Standard Model by e.g. introducing leptoquark symmetry and left-right symmetry. It was also considered to explain the origin of parity symmetry breaking \cite{mp}, \cite{mp1}.

Models used to describe theories of Particle Physics are traditionally constructed in the Lagrangian formalism, i.e. the form of the action is postulated based upon the desired symmetries of the resulting model. A possible geometrical explanation for the structure of such theories is provided by spectral geometry. The Standard Model has been studied for several years in the framework of almost-commutative geometry (see e.g. \cite{connes1},\cite{connes2} and \cite{connes3}) but some puzzles remain unsolved, not only for the product spectral triple, but also for its finite part. Recently, in \cite{bs}, it was proposed to consider the finite spectral triple for the Standard Model as the shadow of some pseudo-Riemannian triple in such a way that the pseudo-Riemannian structure leads to the existence of some nontrivial grading on the Riemannian triple.    

The Pati-Salam model was also considered as a noncommutative geometry by several authors, see e.g. \cite{amst},\cite{amst1},\cite{amst2},\cite{ccw1} and \cite{ccw}. We also highly recommend \cite{cw} as a survey of the historical development of these methods and their applications in Particle Physics. Reduced versions of the Pati-Salam models, i.e. the Left-Right Symmetric Models, were also considered in the framework of noncommutative geometry, firstly as potentially interesting examples for the Connes-Lott scheme of spectral geometry, but then also from the point of view of possible physical applications --- see e.g. \cite{gir},\cite{hj},\cite{is} and \cite{ok}. However, since some of the fundamental axioms of noncommutative geometry were not satisfied, such models were not satisfactory.  Later on, due to the trend of relaxing some of the axioms, e.g. the first order condition, and further development of the spectral theory in their absence (see e.g. \cite{ccw2}), the family of Pati-Salam models was analyzed.

In this paper we propose to take into account the pseudo-Riemannian structure of the finite triple for the Pati-Salam model, in the sense introduced in \cite{bs}. Herein, we analyze the finite spectral triples for such models, discuss possible pseudo-Riemannian structures, relate them to the gradation that distinguishes leptons from quarks, and argue that the existence of such a pseudo-Riemannian structure restricts the whole family of Pati-Salam models to models of the Left-Right Symmetric type.

\section{Finite spectral triples for Pati-Salam models}
In this section we consider the finite spectral triple for the family of Pati-Salam models. We discuss algebras and their commutants, different choices of chiral structures and possible Dirac operators.
\subsection{Spectral data}
The algebra for the Pati-Salam model is of the form 
\begin{equation}
\mathcal{A}=\mathbb{H}_R\oplus\mathbb{H}_L\oplus M_4(\mathbb{C}).
\end{equation}
Let $\mathcal{H}=F\oplus F^\ast$ be the Hilbert space for this model, where the elements $v$ of $F$ are presented in the following form
\begin{equation}
v=\begin{bmatrix}
\nu_R & u_R^1 & u_R^2 & u_R^3 \\
e_R & d_R^1 & d_R^2 & d_R^3 \\
\nu_L & u_L^1 & u_L^2 & u_L^3 \\
e_L & d_L^1 & d_L^2 & d_L^3
\end{bmatrix}. 
\end{equation}
Since
\begin{equation}
\mathrm{End}_{\mathbb{C}}(\mathcal{H})\cong M_4(\mathbb{C})\otimes M_2(\mathbb{C})\otimes M_4(\mathbb{C}),
\end{equation}
we have to represent all operators acting on $\mathcal{H}$ as elements of this tensor product space, and also find the form of the representation $\pi:\mathcal{A}\rightarrow \mathrm{End}_\mathbb{C}(\mathcal{H})$ in this language. Notice that $\mathrm{End}_{\mathbb{C}}(\mathcal{H})$ can be represented \cite{das} on $\mathcal{H}$ as
\begin{equation}
\widetilde{\pi}\left( \alpha \otimes 1_2\otimes\beta\right)\begin{bmatrix}
v \\w
\end{bmatrix} = \begin{bmatrix}
\alpha v\beta^t \\ 
\alpha w \beta^t
\end{bmatrix},
\end{equation}
and
\begin{equation}
\widetilde{\pi}\left(1_4\otimes\begin{bmatrix}
a & b\\
c& d
\end{bmatrix} \otimes 1_4 \right)\begin{bmatrix}
v \\w
\end{bmatrix}=\begin{bmatrix}
av+bw \\
cv+dw
\end{bmatrix},
\end{equation}
for all $\alpha,\beta,v,w\in M_4(\mathbb{C})$ and $a,b,c,d\in\mathbb{C}$.

Let $e_{ij}$ be the matrix that has $1$ in the entry $(i,j)$ and zero otherwise. Then the grading $\gamma$ has the following matrix representation
\begin{equation}
\label{gamma}
\gamma=\begin{bmatrix}
1_2 & \\
& -1_2
\end{bmatrix}\otimes e_{11}\otimes 1_4+1_4\otimes e_{22}\otimes \begin{bmatrix}
-1_2 & \\
& 1_2
\end{bmatrix}.
\end{equation}

There is also another possible choice of grading \cite{da}, for which left-handed leptons have the same parity as right-handed quarks, and vice versa for the opposite chirality:
\begin{equation}
\label{gamma_star}
\gamma_{\star}=\begin{bmatrix}
1_2 & \\
& -1_2
\end{bmatrix}\otimes e_{11} \otimes \begin{bmatrix}
1& \\ & -1_3 
\end{bmatrix} + \begin{bmatrix}
-1 & \\
 &1_3
\end{bmatrix}\otimes e_{22}\otimes \begin{bmatrix}
1_2 & \\
&-1_2
\end{bmatrix}.
\end{equation}

Let $J$ be the real structure, i.e. $J\begin{bmatrix}
v\\ w
\end{bmatrix}=\begin{bmatrix}
w^\ast \\ v^\ast
\end{bmatrix}$.
It is used to define the opposite representation \cite{das}. For $\xi=\widetilde{\pi}\left(\alpha\otimes\begin{bmatrix}
a & b\\
c& d
\end{bmatrix} \otimes \beta \right)$ we take $\xi^\circ=J\xi^\ast J^{-1}=\widetilde{\pi} \left(\beta^t\otimes\begin{bmatrix}
d & b\\
c& a
\end{bmatrix} \otimes \alpha^t \right)$. From now on we will omit the $\widetilde{\pi}$ symbol for a representation. 

The elements of the algebra $\mathcal{A}=\mathbb{H}_R\oplus\mathbb{H}_L\otimes M_4(\mathbb{C})$ are represented on $\mathcal{H}$ as
\begin{equation}
\pi(q_1,q_2,m)=\begin{bmatrix}
q_1 & \\
& q_2
\end{bmatrix}\otimes e_{11}\otimes 1_4 + m\otimes e_{22} \otimes 1_4,
\end{equation}
where $q_1\in\mathbb{H}_R$, $q_2\in\mathbb{H}_L$ and $m\in M_4(\mathbb{C})$.

Notice that $\gamma_{\star}$ does not commute with this representation of the algebra $\mathcal{A}$ unless the symmetry following from $ M_4(\mathbb{C})$ is broken into $\mathbb{C}\oplus M_3(\mathbb{C})$.

Therefore, here we are considering two algebras. The first one being $\mathbb{H}_R\oplus\mathbb{H}_L\oplus M_4(\mathbb{C})$ which we refer to as corresponding to an unreduced Pati-Salam model, and the second one $\mathbb{H}_R\oplus\mathbb{H}_L\oplus\mathbb{C}\oplus M_3(\mathbb{C})$, which we will call reduced.

Since the Dirac operator $D\in\mathrm{End}_\mathbb{C}(\mathcal{H})$, it is of the form
\begin{equation}
\label{dirac001}
D=\sum\limits_{ \substack{i,j=1,2  \\ 1\leq k\leq K} } D_{1ij}^k\otimes e_{ij}\otimes D_{2ij}^k,
\end{equation} 
with $D_{1ij}^k, D_{2ij}^k\in M_4(\mathbb{C})$, for some natural number $K$. From now on, we will assume that summations over omitted indices is over their entire range, if not explicitly stated otherwise. 
\subsection{Commutants}
We now consider the commutants of several algebras related to the unreduced and reduced Pati-Salam models. These results will be crucial to the discussion in section \ref{comms}.

Notice first, that for any (real or complex) matrix algebra $\mathcal{A}$ contained in $M_N(\mathbb{C})$ for some $N$, the commutant $\mathcal{A}'$ is the same as $\mathcal{A}_\mathbb{C}'$, where $\mathcal{A}_\mathbb{C}$ denotes the complexification of $\mathcal{A}$.

By a straightforward computation we can check that the commutant of the algebra of elements $\begin{bmatrix}
q_1 &\\
& q_2
\end{bmatrix}$, with $q_1,q_2\in \mathbb{H}$ is the algebra with elements $\begin{bmatrix}
\alpha 1_2&\\
& \beta 1_2
\end{bmatrix} $, where $\alpha,\beta\in\mathbb{C}$. We denote this algebra by $\mathcal{C}_1$.

In a similar manner, the commutant of the algebra of elements $\begin{bmatrix}
\lambda &\\
& n
\end{bmatrix}$, with $\lambda\in \mathbb{C}$, $n\in M_3(\mathbb{C})$ is the algebra with elements $\begin{bmatrix}
\alpha &\\
& \beta 1_3
\end{bmatrix} $, where $\alpha,\beta\in\mathbb{C}$. We denote this algebra by $\mathcal{C}_2$.

Furthermore, notice that $M_4(\mathbb{C})'\cong \mathbb{C}$. Therefore, we can describe the commutants of the Pati-Salam algebra for the unreduced (i.e. with $\mathcal{A}=\mathbb{H}_R\oplus\mathbb{H}_L\oplus M_4(\mathbb{C})$) case, and the reduced one (i.e. with $\mathcal{A}=\mathbb{H}_R\oplus\mathbb{H}_L\oplus \mathbb{C}\oplus M_3(\mathbb{C})$).

\begin{proposition}
\label{unreduced_comm}
The commutant $\left(\mathbb{H}_R\oplus\mathbb{H}_L\oplus M_4(\mathbb{C})\right)'$ in $\mathrm{End}_{\mathbb{C}}(\mathcal{H})$ is the algebra generated by elements of the form
\begin{equation}
A_1\otimes e_{11}\otimes E_1 +1_4\otimes e_{22}\otimes E_2,
\end{equation} 
where $A_1\in \mathcal{C}_1$ and $E_1,E_2\in M_4(\mathbb{C})$.
\end{proposition}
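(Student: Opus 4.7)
The plan is to exploit the isomorphism $\mathrm{End}_\mathbb{C}(\mathcal{H}) \cong M_4(\mathbb{C}) \otimes M_2(\mathbb{C}) \otimes M_4(\mathbb{C})$, expand an arbitrary element $X$ in the matrix-unit basis of the middle factor, and test commutation separately against the two summands of $\pi(\mathcal{A})$: the block-quaternionic part $\mathrm{diag}(q_1,q_2) \otimes e_{11} \otimes 1_4$ and the $M_4(\mathbb{C})$-part $m \otimes e_{22} \otimes 1_4$.

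First, I would write $X = \sum_{i,j=1}^{2} X_{ij}$, with each $X_{ij}$ supported on the $e_{ij}$-block and so identified with an element of $M_4(\mathbb{C}) \otimes M_4(\mathbb{C})$. Imposing $[X, m \otimes e_{22} \otimes 1_4] = 0$ for every $m \in M_4(\mathbb{C})$, and using $e_{ij}e_{kl} = \delta_{jk}e_{il}$, the commutator splits along the four $e_{ij}$-blocks. The $e_{12}$ and $e_{21}$ components give $X_{12}(m \otimes 1_4) = 0$ and $(m \otimes 1_4) X_{21} = 0$ for all $m$, hence $X_{12} = X_{21} = 0$; the $e_{11}$-component is automatic; the $e_{22}$-component forces $X_{22}$ to commute with $m \otimes 1_4$ for every $m \in M_4(\mathbb{C})$. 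By the elementary tensor-factor commutation identity $(M_4(\mathbb{C}) \otimes 1_4)' = 1_4 \otimes M_4(\mathbb{C})$, this gives $X_{22} = 1_4 \otimes E_2$ for some $E_2 \in M_4(\mathbb{C})$.

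Next, imposing commutation with $\mathrm{diag}(q_1,q_2) \otimes e_{11} \otimes 1_4$ leaves the already-determined $e_{22}$-component invariant, and constrains $X_{11}$ by $[\mathrm{diag}(q_1,q_2) \otimes 1_4, X_{11}] = 0$ for all $q_1, q_2 \in \mathbb{H}$. By the observation preceding the proposition, the commutant is unchanged upon complexifying the block-quaternion algebra, and then the general identity $(B \otimes 1_n)' = B' \otimes M_n(\mathbb{C})$ identifies the admissible $X_{11}$ with $\mathcal{C}_1 \otimes M_4(\mathbb{C})$, i.e.\ with finite sums of simple tensors $A_1 \otimes E_1$ with $A_1 \in \mathcal{C}_1$ and $E_1 \in M_4(\mathbb{C})$.

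Combining the two constraints, every element of the commutant is a finite linear combination of terms of the advertised shape $A_1 \otimes e_{11} \otimes E_1 + 1_4 \otimes e_{22} \otimes E_2$, and since $e_{11}e_{22} = e_{22}e_{11} = 0$ this linear span is automatically closed under multiplication, so it coincides with the algebra they generate. The reverse inclusion is an immediate verification using the same $e_{ij}$-multiplication rules. The only nonroutine ingredient is the tensor-factor commutation lemma applied to a real subalgebra, which is where the complexification remark is essential; this is the one step that requires care, but everything else is a bookkeeping exercise in the matrix-unit decomposition.
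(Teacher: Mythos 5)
Your proof is correct and takes essentially the same route as the paper's: expand a general operator in the matrix units $e_{ij}$ of the middle $M_2(\mathbb{C})$ factor, test commutation against the two summands of $\pi(\mathcal{A})$ separately, and conclude that the off-diagonal blocks vanish, the $e_{22}$-block lies in $1_4\otimes M_4(\mathbb{C})$, and the $e_{11}$-block lies in $\mathcal{C}_1\otimes M_4(\mathbb{C})$. The only difference is bookkeeping: the paper factors out the third tensor leg at the outset (the algebra acts trivially there), while you carry it along explicitly via $\left(B\otimes 1_4\right)'=B'\otimes M_4(\mathbb{C})$, which is the same computation.
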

\begin{proof}
Any element of the considered algebra may be represented as
\begin{equation}
\pi(q_1,q_2,m)=\begin{bmatrix}
q_1 &\\
& q_2
\end{bmatrix}\otimes e_{11}\otimes 1_4 +m\otimes e_{22}\otimes 1_4,
\end{equation}
with $q_1,q_2\in\mathbb{H}$ and $m\in M_4(\mathbb{C})$. It is enough to find which elements of the form 
\begin{equation}
A_1\otimes e_{11}+A_2\otimes e_{12}+A_3\otimes e_{21}+A_4\otimes e_{22}\in M_4(\mathbb{C})\otimes M_2(\mathbb{C})
\end{equation}
commute with $\begin{bmatrix}
q_1 &\\
& q_2
\end{bmatrix}\otimes e_{11} +m\otimes e_{22}$ for all $q_1,q_2$ and $m$. 

The only possible solutions are with $A_1\in\mathcal{C}_1$, $A_4\sim 1_4$ and $A_2=A_3=0$.
\end{proof}

In a perfectly similar way, we get the following
\begin{proposition}
\label{reduced_comm}
The commutant $\left(\mathbb{H}_R\oplus\mathbb{H}_L\oplus \mathbb{C}\oplus M_3(\mathbb{C})\right)'$ in $\mathrm{End}_{\mathbb{C}}(\mathcal{H})$ is the algebra generated by elements of the form
\begin{equation}
A_1\otimes e_{11}\otimes E_1 +A_2\otimes e_{22}\otimes E_2,
\end{equation} 
where $A_1\in \mathcal{C}_1$, $A_2\in \mathcal{C}_2$ and $E_1,E_2\in M_4(\mathbb{C})$.
\end{proposition}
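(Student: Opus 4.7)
The plan is to mirror the argument of Proposition~\ref{unreduced_comm}, with the only change being the replacement of the generic $m\in M_4(\mathbb{C})$ in the second summand of $\pi$ by the block-diagonal $\begin{bmatrix}\lambda & \\ & n\end{bmatrix}$, with $\lambda\in\mathbb{C}$ and $n\in M_3(\mathbb{C})$. First I would observe that, since the algebra acts as $1_4$ on the third tensor factor of $M_4(\mathbb{C})\otimes M_2(\mathbb{C})\otimes M_4(\mathbb{C})$, that slot imposes no commutation condition; it will therefore supply the arbitrary $E_1,E_2\in M_4(\mathbb{C})$ in the statement, and the whole problem reduces to computing the commutant of the image of $\pi$ inside $M_4(\mathbb{C})\otimes M_2(\mathbb{C})$.

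Working in $M_4(\mathbb{C})\otimes M_2(\mathbb{C})$, I would write a generic candidate in the form $\sum_{i,j=1,2} A_{ij}\otimes e_{ij}$ and require commutation with $\begin{bmatrix}q_1 & \\ & q_2\end{bmatrix}\otimes e_{11}+\begin{bmatrix}\lambda & \\ & n\end{bmatrix}\otimes e_{22}$ for all allowed parameters. Using $e_{ij}e_{k\ell}=\delta_{jk}e_{i\ell}$, the commutator splits into four equations, one per $e_{ij}$. The diagonal ones demand that $A_{11}$ commute with every $\begin{bmatrix}q_1 & \\ & q_2\end{bmatrix}$ and $A_{22}$ with every $\begin{bmatrix}\lambda & \\ & n\end{bmatrix}$; by the two elementary commutant computations recalled immediately before the statement, this forces $A_{11}\in\mathcal{C}_1$ and $A_{22}\in\mathcal{C}_2$. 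This is precisely the refinement over Proposition~\ref{unreduced_comm}, in which $A_{22}$ was only constrained to lie in the scalars.

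The genuinely new step is the off-diagonal analysis, since the ``22'' block is no longer allowed to be arbitrary. The conditions $A_{12}\begin{bmatrix}\lambda & \\ & n\end{bmatrix}=\begin{bmatrix}q_1 & \\ & q_2\end{bmatrix}A_{12}$ and the symmetric identity for $A_{21}$ must hold independently for every choice of parameters. Choosing $q_1=q_2=0$ together with $(\lambda,n)=(1,1_3)$ immediately collapses them to $A_{12}=A_{21}=0$. The main (and very mild) obstacle is just to confirm that the reduced algebra is still rich enough to support this specialisation, which is transparent since $(0,0,1,1_3)\in\mathcal{A}$. Reassembling the surviving diagonal pieces with the free third factor then produces exactly the form announced in the proposition.
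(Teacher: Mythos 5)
Your argument is correct and is essentially the paper's own route: the paper proves the unreduced case by reducing to $M_4(\mathbb{C})\otimes M_2(\mathbb{C})$, splitting a candidate into $e_{ij}$-blocks, and reading off the constraints, and then simply states that the reduced case follows ``in a perfectly similar way'' --- which is exactly the computation you carry out, with the diagonal blocks landing in $\mathcal{C}_1$ and $\mathcal{C}_2$ and the off-diagonal blocks killed by specialising to $(0,0,1,1_3)$. No gaps; your explicit treatment of the off-diagonal equations just makes the ``similar way'' precise.
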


\subsection{Dirac operators and reality}
Let us  now consider self-adjoint Dirac operators that commute with the real structure $J$. Because $DJ=JD$, we have $D=JDJ^{-1}=\left(D^{\ast}\right)^\circ$, but since $D=D^\ast$, the necessary condition that has to be satisfied is $D = D^{\circ}$. Notice that this is a weaker condition than the original two together. Therefore, we will consider them separately. We begin with the following observation 
\begin{proposition}
\label{selfadj}
The Dirac operator $D=\sum D_{ijklrs}e_{kl}\otimes e_{ij}\otimes e_{rs}$ is self-adjoint if and only if $D_{ijklrs}=\bar{D}_{jilksr}$ for all indices.
\end{proposition}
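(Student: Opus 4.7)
The plan is to read off coefficients in the matrix-unit basis. The set $\{e_{kl}\otimes e_{ij}\otimes e_{rs}\}$ is a basis of $M_4(\mathbb{C})\otimes M_2(\mathbb{C})\otimes M_4(\mathbb{C})\cong\mathrm{End}_\mathbb{C}(\mathcal{H})$, so an operator vanishes iff every coefficient in this expansion vanishes. Hence $D=D^{\ast}$ iff $D$ and $D^{\ast}$ have identical coefficients in this basis, and the problem reduces to writing $D^{\ast}$ in the same basis and matching indices.

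First I would use that taking adjoints commutes with tensor products and that $(e_{pq})^{\ast}=e_{qp}$ in any $M_N(\mathbb{C})$. Applied termwise to $D=\sum D_{ijklrs}\,e_{kl}\otimes e_{ij}\otimes e_{rs}$, this gives
\begin{equation*}
D^{\ast}=\sum \bar{D}_{ijklrs}\,e_{lk}\otimes e_{ji}\otimes e_{sr}.
\end{equation*}
Then I would relabel the dummy summation indices by swapping $k\leftrightarrow l$, $i\leftrightarrow j$, and $r\leftrightarrow s$ so that the basis elements appearing on the right become $e_{kl}\otimes e_{ij}\otimes e_{rs}$ again. This turns the coefficient $\bar{D}_{ijklrs}$ into $\bar{D}_{jilksr}$, giving
\begin{equation*}
D^{\ast}=\sum \bar{D}_{jilksr}\,e_{kl}\otimes e_{ij}\otimes e_{rs}.
\end{equation*}

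Finally I would compare with $D=\sum D_{ijklrs}\,e_{kl}\otimes e_{ij}\otimes e_{rs}$ and invoke linear independence of the basis: equality of operators is equivalent to equality $D_{ijklrs}=\bar{D}_{jilksr}$ of every coefficient, proving both implications simultaneously. There is no real obstacle here; the only point requiring care is being consistent about the index permutation when relabeling, since each of the three tensor factors contributes its own swap and it is easy to transpose only two of the three pairs by mistake.
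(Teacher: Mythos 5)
Your proposal is correct and follows essentially the same route as the paper: compute $D^{\ast}$ termwise using $(e_{pq})^{\ast}=e_{qp}$, relabel the dummy indices to restore the basis elements $e_{kl}\otimes e_{ij}\otimes e_{rs}$, and compare coefficients. The paper's proof is just a one-line version of this computation, so there is nothing to add.
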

\begin{proof}
Notice that
\begin{equation}D^\ast=\sum \bar{D}_{ijklrs} e_{lk}\otimes e_{ji}\otimes e_{sr}=\sum \bar{D}_{jilksr} e_{kl}\otimes e_{ij}\otimes e_{rs}.
\end{equation}
\end{proof}

Moreover, we have the following
\begin{proposition}
The Dirac operator $D=\sum D_{ijklrs}e_{kl}\otimes e_{ij}\otimes e_{rs}$ commutes with $J$ if and only if 
\begin{equation}
D_{11klrs}=\bar{D}_{22rskl},\qquad \mbox{for}\qquad D_{12klrs}=\bar{D}_{21rskl},
\end{equation}
for all $k,l,r,s$.
\end{proposition}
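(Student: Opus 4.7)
The plan is to convert the commutation $DJ=JD$ into coefficient identities by means of the opposite representation already introduced. Since $J^2=1$, the relation $DJ=JD$ is equivalent to $D=JDJ^{-1}$, and the identity $\xi^\circ = J\xi^\ast J^{-1}$ rearranges (by substituting $\xi\mapsto \xi^\ast$) to $JDJ^{-1}=(D^\ast)^\circ$. Thus it suffices to match the basis expansions of $D$ and $(D^\ast)^\circ$ coefficient by coefficient.

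I would first invoke Proposition \ref{selfadj} to expand $D^\ast=\sum\bar{D}_{jilksr}\,e_{kl}\otimes e_{ij}\otimes e_{rs}$. Applying $(\cdot)^\circ$ term by term via the given formula $\widetilde{\pi}(\alpha\otimes M\otimes\beta)^\circ=\widetilde{\pi}(\beta^{t}\otimes M'\otimes\alpha^{t})$ produces two effects: on the outer factors one has $e_{kl}\otimes(\cdot)\otimes e_{rs}\mapsto e_{sr}\otimes(\cdot)\otimes e_{lk}$, i.e.\ transposition combined with exchange of the two $M_4(\mathbb{C})$ legs; on the middle $M_2(\mathbb{C})$ leg the ``corner swap'' $a\leftrightarrow d$ acts as $e_{11}\leftrightarrow e_{22}$ while $e_{12}$ and $e_{21}$ are fixed.

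After relabeling dummy indices so that each summand of $(D^\ast)^\circ$ is again written as a multiple of $e_{kl}\otimes e_{ij}\otimes e_{rs}$, equating its coefficient with that of $D$ yields four identities indexed by $(i,j)\in\{1,2\}^2$. The two cases $(1,1)$ and $(2,2)$ give the pair $D_{11klrs}=\bar{D}_{22rskl}$ and $D_{22klrs}=\bar{D}_{11rskl}$, which are equivalent via complex conjugation and relabeling; analogously the cases $(1,2)$ and $(2,1)$ collapse to the single condition $D_{12klrs}=\bar{D}_{21rskl}$. One therefore recovers exactly the two identities stated in the proposition, and since every step is an equivalence the converse direction follows by reading the calculation backwards.

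The argument is essentially mechanical; the main obstacle is bookkeeping, namely tracking how the corner swap on the middle $M_2(\mathbb{C})$ factor interacts with the transposition and exchange on the outer $M_4(\mathbb{C})$ factors, and checking that the four coefficient equations genuinely reduce to the two stated ones.
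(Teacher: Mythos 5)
Your argument is correct and follows essentially the same route as the paper: both reduce $DJ=JD$ to $D=JDJ^{-1}$ and compare coefficients in the basis $e_{kl}\otimes e_{ij}\otimes e_{rs}$, where conjugation by $J$ swaps the two outer $M_4(\mathbb{C})$ legs, exchanges $e_{11}\leftrightarrow e_{22}$ and $e_{12}\leftrightarrow e_{21}$ in the middle factor, and conjugates coefficients. The only cosmetic difference is that you obtain this action by composing the adjoint formula of Proposition \ref{selfadj} with the stated opposite-representation rule $\xi^\circ=J\xi^\ast J^{-1}$, i.e.\ via $JDJ^{-1}=(D^\ast)^\circ$, whereas the paper computes $JXJ^{-1}$ on simple tensors directly; the bookkeeping, and the resulting two identities, coincide.
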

\begin{proof}
It follows from observing that for an operator $X$ of the form $A\otimes \begin{pmatrix}
a & b \\
c & d
\end{pmatrix}\otimes B $ we have 
\begin{equation}
JXJ^{-1}=\overline{B}\otimes \begin{pmatrix}
\bar{d} & \bar{c} \\
\bar{b} &\bar{a}
\end{pmatrix}\otimes \overline{A},
\end{equation}
and the fact that the requirement $DJ=JD$ is equivalent to $D=JDJ^{-1}$. 
\end{proof}

Let $DJ=JD$, i.e. $D=JDJ^{-1}$. We may write the Dirac operator as 
\begin{equation}
D=\underbrace{D_{11}+D_{12}}_{D_0}+\underbrace{D_{22}+D_{21}}_{D_1},\end{equation}
where 
\begin{equation}
D_{ij}=\sum\limits_k D_{1ij}^k\otimes e_{ij}\otimes D_{2ij}^k,
\end{equation}
as previously described in \eqref{dirac001}. It follows from the last proof that $D_1=JD_0J^{-1}$.

Let $A$ be an operator such that $AJ=\alpha JA$ for some $\alpha=\pm 1$. Moreover, suppose that $A=A_{11}+A_{22}$, where 
\begin{equation}
A_{ij}=\sum\limits_{k} A_{1ij}^k\otimes e_{ij}\otimes A_{2ij}^k,
\end{equation}
for some $A_{lij}^k\in M_4(\mathbb{C})$. Then we have the following
\begin{lemma}
$[A,D]=0$ if and only if $[A,D_0]=0$. Analogously for anticommutators.
\end{lemma}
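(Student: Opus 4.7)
The plan is to exploit the $M_2(\mathbb{C})$ block structure together with the conjugation identity $D_1 = JD_0J^{-1}$ and the elementary fact that $J^2 = 1$ (immediate from the definition $J\begin{bmatrix}v\\w\end{bmatrix} = \begin{bmatrix}w^\ast\\v^\ast\end{bmatrix}$, since Hermitian conjugation is an involution on $M_4(\mathbb{C})$).

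First I would observe that because $A = A_{11}+A_{22}$ is supported on the diagonal $e_{11}, e_{22}$ blocks, while $D_0 = D_{11}+D_{12}$ is supported on $e_{11}, e_{12}$ and $D_1 = D_{22}+D_{21}$ on $e_{22}, e_{21}$, the identities $e_{ij}e_{kl} = \delta_{jk} e_{il}$ force $AD_0$ and $D_0A$ to involve only $e_{11}, e_{12}$, and $AD_1, D_1A$ only $e_{22}, e_{21}$. Consequently $[A,D] = [A,D_0] + [A,D_1]$ is a sum of two contributions supported on disjoint blocks of the $M_2(\mathbb{C})$ factor, so $[A,D] = 0$ is equivalent to the simultaneous vanishing of $[A,D_0]$ and $[A,D_1]$.

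The second step is to couple the two pieces through the real structure. Using $D_1 = JD_0J$ and the hypothesis $AJ = \alpha JA$ with $\alpha^2 = 1$, a one-line manipulation gives
\begin{equation*}
[A,D_1] \;=\; AJD_0J - JD_0JA \;=\; \alpha J A D_0 J - \alpha J D_0 A J \;=\; \alpha\, J [A,D_0] J.
\end{equation*}
Since $J$ is invertible, this shows $[A,D_1] = 0$ iff $[A,D_0] = 0$. Combined with the block-support argument, this yields $[A,D] = 0 \Leftrightarrow [A,D_0] = 0$.

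For the anticommutator version nothing really changes: $\{A,D\} = \{A,D_0\}+\{A,D_1\}$ still decomposes into pieces with disjoint $e_{ij}$ support, and the identical computation gives $\{A,D_1\} = \alpha J \{A,D_0\} J$, so the same equivalence follows. There is no serious obstacle; the only subtlety to keep track of is the sign handling via $\alpha = \pm 1$, which is what makes the two manipulations $JAD_0J$ and $JD_0AJ$ pick up the \emph{same} factor $\alpha$ and so cancel inside the (anti)commutator.
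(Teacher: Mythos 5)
Your proof is correct and follows essentially the same route as the paper: the paper likewise writes $[A,D]=[A,D_0]+\alpha J[A,D_0]J$ (your identity $[A,D_1]=\alpha J[A,D_0]J$) and concludes from the disjoint $e_{11},e_{12}$ versus $e_{21},e_{22}$ block support that each piece must vanish separately. Your version merely makes the two steps slightly more explicit.
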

\begin{proof}
Observe that $[A,D]=[A,D_0]+\alpha J[A,D_0]J$, since $J^2=\mathrm{id}$. Furthermore, $[A,D_0]$ only contains terms with $\cdots \otimes e_{11}\otimes \cdots$ and $\cdots \otimes e_{12} \otimes \cdots$, while $J[A,D_0]J$ only contains terms with $\cdots \otimes e_{21}\otimes \cdots$ and $\cdots \otimes e_{22}\otimes \cdots$.
\end{proof}
\subsection{Dirac operators for $\gamma$} Since we are interested in Dirac operators that commute with the real structure $J$ we can write $D=D_0+JD_0J^{-1}$. We first consider all Dirac operators $D$ that anticommute with $\gamma$, given by \eqref{gamma}, as a grading. It is enough to restrict ourselves to the $D_0$ part.

We have the following
\begin{proposition}
$D=D_0+JD_0J^{-1}$ anticommutes with $\gamma$ if and only if $D_0$ is of the form
\begin{equation}
\begin{split}
D_0&=\sum\limits_{k}\left\{\begin{bmatrix}
0_2& X_k \\
Y_k &0_2
\end{bmatrix}\otimes e_{11}\otimes A_k +  \begin{bmatrix}
P_k& Q_k \\
0_2 &0_2
\end{bmatrix}\otimes e_{12}\otimes \begin{bmatrix}
Z_k &0_2 \\
T_k& 0_2
\end{bmatrix} +\right. \\ &+ \left.  \begin{bmatrix}
0_2& 0_2\\
U_k& V_k 
\end{bmatrix}\otimes e_{12}\otimes \begin{bmatrix}
0_2& W_k \\
0_2& S_k
\end{bmatrix}  \right\},
\end{split}
\end{equation}
where $X_k, Y_k, Z_k, T_k, U_k, V_k, W_k, S_k,P_k,Q_k\in M_2(\mathbb{C})$ and  $A_k\in M_4(\mathbb{C})$.
\end{proposition}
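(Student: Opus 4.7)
The plan is to reduce the question on $D$ to one on $D_0$ alone, then split $D_0$ according to the middle factor and treat each slot separately.

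First, I would check how $J$ and $\gamma$ interact. Applying the formula for $J$-conjugation of simple tensors, $J(A\otimes\left(\begin{smallmatrix}a&b\\c&d\end{smallmatrix}\right)\otimes B)J^{-1} = \bar B\otimes\left(\begin{smallmatrix}\bar d&\bar c\\\bar b&\bar a\end{smallmatrix}\right)\otimes\bar A$, to each summand of $\gamma$ in \eqref{gamma} gives $J\gamma J^{-1} = -\gamma$, so $J$ and $\gamma$ anticommute. Combining this with $D = D_0 + JD_0J^{-1}$ and $J^2=\mathrm{id}$ yields
\begin{equation}
\{D,\gamma\} \;=\; \{D_0,\gamma\} \;-\; J\,\{D_0,\gamma\}\,J^{-1}.
\end{equation}
Since $D_0$ contains only $e_{11}$ and $e_{12}$ in the middle slot while $\gamma$ carries $e_{11}$ and $e_{22}$, the anticommutator $\{D_0,\gamma\}$ has middle support in $\{e_{11},e_{12}\}$; its $J$-conjugate has middle support in $\{e_{22},e_{21}\}$. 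These two pieces are linearly independent, so $\{D,\gamma\}=0$ iff $\{D_0,\gamma\}=0$.

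Next I would split $D_0 = D_{11}+D_{12}$ and observe that $\{D_{11},\gamma\}$ lives entirely in the $e_{11}$ slot whereas $\{D_{12},\gamma\}$ lives entirely in $e_{12}$ (the other products $e_{ij}e_{kl}$ vanish), so the two conditions decouple. For $D_{11}$, the surviving terms produce
\begin{equation}
\sum_k \bigl\{\Gamma_1, D_{1,11}^{k}\bigr\}\otimes e_{11}\otimes D_{2,11}^{k} \;=\; 0,\qquad \Gamma_1 = \begin{bmatrix}1_2 & \\ & -1_2\end{bmatrix}.
\end{equation}
Taking the $D_{2,11}^{k}$ linearly independent without loss of generality (re-expand in a basis of their span), this forces $\{\Gamma_1,D_{1,11}^{k}\} = 0$, so each $D_{1,11}^{k}$ is block off-diagonal of the form $\left[\begin{smallmatrix}0_2 & X_k \\ Y_k & 0_2\end{smallmatrix}\right]$, producing the first term of the claimed form.

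The main obstacle is the $D_{12}$ case, which yields a two-sided tensor relation
\begin{equation}
\sum_k \bigl(\Gamma_1 D_{1,12}^{k}\otimes D_{2,12}^{k} \;+\; D_{1,12}^{k}\otimes D_{2,12}^{k}\,\Gamma_2\bigr) \;=\; 0
\end{equation}
in $M_4(\mathbb{C})\otimes M_4(\mathbb{C})$, where $\Gamma_2 = \left[\begin{smallmatrix}-1_2 & \\ & 1_2\end{smallmatrix}\right]$. The plan is to expand $T := \sum_k D_{1,12}^{k}\otimes D_{2,12}^{k}$ in the basis $\{E_{ij}\otimes E_{kl}\}$: since $\Gamma_1 E_{ij} = \epsilon_i E_{ij}$ with $\epsilon_i = +1$ for $i\in\{1,2\}$ and $-1$ for $i\in\{3,4\}$, and $E_{kl}\Gamma_2 = \eta'_l E_{kl}$ with $\eta'_l = -1$ for $l\in\{1,2\}$ and $+1$ for $l\in\{3,4\}$, the vanishing condition forces every coefficient of $T$ to be zero unless $\epsilon_i = -\eta'_l$. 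Equivalently, with $P^\pm$ the spectral projectors of $\Gamma_1$, one has $T = (P^+\otimes 1)T(1\otimes P^+) + (P^-\otimes 1)T(1\otimes P^-)$, so $T$ splits into two sectors: either the rows $1$-$2$ of the first factor are paired with the columns $1$-$2$ of the second factor, or rows $3$-$4$ are paired with columns $3$-$4$. Re-expanding each sector as a sum of simple tensors yields precisely the two $e_{12}$ terms of the proposition, namely $\left[\begin{smallmatrix}P_k & Q_k \\ 0 & 0\end{smallmatrix}\right]\otimes e_{12}\otimes\left[\begin{smallmatrix}Z_k & 0 \\ T_k & 0\end{smallmatrix}\right]$ and $\left[\begin{smallmatrix}0 & 0 \\ U_k & V_k\end{smallmatrix}\right]\otimes e_{12}\otimes\left[\begin{smallmatrix}0 & W_k \\ 0 & S_k\end{smallmatrix}\right]$. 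Conversely, each of the three listed terms manifestly anticommutes with $\gamma$ by the same block computation, establishing the equivalence.
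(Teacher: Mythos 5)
Your proof is correct and takes essentially the same route as the paper: reduce to the $D_0$ part (you reprove, via $J\gamma J^{-1}=-\gamma$ and a middle-slot support argument, what the paper gets from its preceding lemma on (anti)commutators with $D=D_0+JD_0J^{-1}$), and then impose the anticommutation condition componentwise. Your spectral-projector bookkeeping for the $e_{11}$ and $e_{12}$ sectors is just a tidier version of the paper's direct index comparison of $D\gamma$ with $\gamma D$, and it correctly handles both directions of the equivalence.
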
 
\begin{proof}
Notice that 
\begin{equation}
\gamma=\sum\limits_{n=1,2} (e_{nn}\otimes e_{11}\otimes e_{mm}-e_{mm}\otimes e_{22}\otimes e_{nn})+\sum\limits_{n=3,4}(e_{mm}\otimes e_{22}\otimes e_{nn}-e_{nn}\otimes e_{11}\otimes e_{mm}).
\end{equation}

Let us write $D$ as $D=\sum\hat{D}_{ijklrs},$ where $\hat{D}_{ijklrs}=D_{ijklrs}e_{kl}\otimes e_{ij}\otimes e_{rs}$. Simple computation shows that
\begin{equation}
D\gamma= \sum\limits_{n=1,2}\left(\hat{D}_{i1knrm}-\hat{D}_{i2kmrn}\right) + \sum\limits_{n=3,4}\left(\hat{D}_{i2kmrn}-\hat{D}_{i1knrm}\right),
\end{equation}
and
\begin{equation}
\gamma D=\sum\limits_{n=1,2}\left( \hat{D}_{1jnlms}-\hat{D}_{2jmlns}\right)+ \sum\limits_{n=3,4 }\left(\hat{D}_{2jmlns}-\hat{D}_{1jnlms}\right),
\end{equation}
hence by direct inspection we see that the Dirac operator has to be of the claimed form.
\end{proof}

\subsection{Dirac operators for $\gamma_\star$}
Again, we are considering all Dirac operators which commute with $J$, and therefore of the form $D=D_0+JD_0J^{-1}$, but now which anticommute with $\gamma_\star$, given by \eqref{gamma_star}, as a grading. This time we have,
\begin{proposition}
$D=D_0+JD_0J^{-1}$ anticommutes with $\gamma_\star$ if and only if $D_0$ is of the form
\begin{equation}
\begin{split}
D_0=& \sum_k\left\{\begin{bmatrix}
X_k &  \\
 &Y_k
\end{bmatrix}\otimes e_{11}\otimes \begin{bmatrix}
& \alpha_{1k} & \alpha_{2k} & \alpha_{3k} \\
\beta_{1k} &  &  &\\
\beta_{2k} &  &  &\\
\beta_{3k} &  &  &
\end{bmatrix}+\begin{bmatrix}
 & Z_k\\
T_k &
\end{bmatrix}\otimes e_{11}\otimes \begin{bmatrix}
\gamma_k & \\
& C_k
\end{bmatrix} +\right. \\  &+ \left.
\begin{bmatrix}
\delta_{1k} &\\
\delta_{2k} &\\
& P_k
\end{bmatrix}\otimes e_{12} \otimes \begin{bmatrix}
\sigma_{1k} & \sigma_{2k} &\\
& & E_k
\end{bmatrix}+\begin{bmatrix}
& F_k \\
\mu_{1k} &\\
\mu_{2k} &
\end{bmatrix}\otimes e_{12}\otimes \begin{bmatrix}
&\nu_{1k} &\nu_{2k}\\
G_k & &
\end{bmatrix} \right\},
\end{split}
\end{equation}
where $X_k,Y_k,Z_k,T_k\in M_2(\mathbb{C})$, $E_k,G_k\in M_{3\times 2}(\mathbb{C})$, $P_k,F_k\in M_{2\times 3}(\mathbb{C})$,  $C_k\in M_3(\mathbb{C})$, and $\alpha_{lk},\beta_{lk},\gamma_k, \delta_{lk},\sigma_{lk},\mu_{lk},\nu_{lk}\in \mathbb{C}$.
\end{proposition}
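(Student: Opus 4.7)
The plan is to mimic the proof of the preceding proposition for $\gamma$, with the added complication that $\gamma_\star$ acts nontrivially on both the first and the third tensor factors. Write $\gamma_\star=G_1\otimes e_{11}\otimes E_1+G_2\otimes e_{22}\otimes E_2$ with $G_1=\mathrm{diag}(1_2,-1_2)$, $G_2=\mathrm{diag}(-1,1_3)$, $E_1=\mathrm{diag}(1,-1_3)$, $E_2=\mathrm{diag}(1_2,-1_2)$. First, I would verify that the preceding lemma can be invoked: using the formula for $JXJ^{-1}$ from the beginning of the subsection, together with the identifications $E_1=-G_2$ and $E_2=G_1$, a short computation yields $J\gamma_\star J^{-1}=-\gamma_\star$, so $\gamma_\star J=-J\gamma_\star$ and $\gamma_\star$ has the $A_{11}+A_{22}$ shape required by the lemma. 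Hence $\{\gamma_\star,D\}=0$ is equivalent to $\{\gamma_\star,D_0\}=0$, and it suffices to analyze $D_0$.

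Next, I would expand $D_0=D_{11}+D_{12}$ in the matrix-unit basis, writing $D_{ij}=\sum_{k,l,r,s}D_{ij,klrs}\,e_{kl}\otimes e_{ij}\otimes e_{rs}$. Using $e_{ab}e_{cd}=\delta_{bc}e_{ad}$ and denoting by $g_i(n),e_i(n)$ the diagonal signs of $G_i,E_i$, the anticommutator $\{\gamma_\star,D_0\}$ collapses to a sum whose $(k,l,r,s)$-coefficient is $\bigl(g_1(k)e_1(r)+g_1(l)e_1(s)\bigr)D_{11,klrs}$ in the $e_{11}$ sector and $\bigl(g_1(k)e_1(r)+g_2(l)e_2(s)\bigr)D_{12,klrs}$ in the $e_{12}$ sector. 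Since the matrix units are linearly independent, each such coefficient must vanish separately.

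The last step is a case analysis of which index tuples make the sign sums vanish. For the $e_{11}$ sector, $g_1(k)g_1(l)=-e_1(r)e_1(s)$ splits into two sub-cases: either the first factor is $G_1$-block-diagonal and the third factor is $E_1$-block-off-diagonal (producing the summand with $X_k,Y_k$ and the $\alpha_{jk},\beta_{jk}$), or the first factor is block-off-diagonal and the third is block-diagonal (producing the $Z_k,T_k,\gamma_k,C_k$ summand). For the $e_{12}$ sector, the row- and column-block decompositions on the first factor differ (rows split $(2,2)$ by $G_1$, columns split $(1,3)$ by $G_2$), and likewise on the third factor (rows $(1,3)$ by $E_1$, columns $(2,2)$ by $E_2$); the two admissible sign configurations yield the remaining two summands, with the mismatch between the row- and column-structures forcing the rectangular $M_{2\times 3}$ and $M_{3\times 2}$ blocks $P_k,F_k,E_k,G_k$ together with the scalar fringes $\delta_{jk},\sigma_{jk},\mu_{jk},\nu_{jk}$.

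The main obstacle is the bookkeeping in this final step: one must track four independent sign functions on four different block decompositions and verify that the admissible $(k,l,r,s)$ patterns reassemble into precisely the four summands claimed, with the correct rectangular shapes and the correct placement of the scalar entries. Once this case analysis is done, reassembling $D=D_0+JD_0J^{-1}$ yields the stated form.
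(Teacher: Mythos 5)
Your proposal is correct and takes essentially the same route as the paper: both reduce the problem to the $D_0$ part (your explicit check that $J\gamma_\star J^{-1}=-\gamma_\star$, so the preceding lemma applies, is valid and matches the paper's implicit use of it) and then compare coefficients of the matrix units $e_{kl}\otimes e_{ij}\otimes e_{rs}$ against the block-diagonal sign structure of $\gamma_\star$. Your sign-function bookkeeping $\bigl(g_1(k)e_1(r)+g_1(l)e_1(s)\bigr)$ and $\bigl(g_1(k)e_1(r)+g_2(l)e_2(s)\bigr)$ is just a cleaner rewriting of the paper's explicit index sums, and the case analysis correctly reproduces the four claimed summands, including the rectangular blocks in the $e_{12}$ sector.
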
 
\begin{proof}
As before, let us first write the grading in a more convenient form,
\begin{equation}
\begin{split}
\gamma_{\star}=& \sum\limits_{n=1,2}(e_{nn}\otimes e_{11}\otimes e_{11}-e_{11}\otimes e_{22}\otimes e_{nn})+\sum\limits_{\substack{n=3,4 \\ m=2,3,4}}(e_{nn}\otimes e_{11}\otimes e_{mm}-e_{mm}\otimes e_{22}\otimes e_{nn})+ \\
& +\sum\limits_{n=3,4}(e_{11}\otimes e_{22}\otimes e_{nn}-e_{nn}\otimes e_{11}\otimes e_{11})+\sum\limits_{\substack{m=1,2 \\ n=2,3,4}}(e_{nn}\otimes e_{22}\otimes e_{mm}-e_{mm}\otimes e_{11}\otimes e_{nn}).
\end{split}
\end{equation}
For $D=\sum\hat{D}_{ijklrs}$ notice that
\begin{equation}
\begin{split}
D\gamma_\star&= \sum\limits_{n=1,2 }\left(\hat{D}_{i1knr1}-\hat{D}_{i2k1rn}\right)+\sum\limits_{\substack{n=3,4 \\ m=2,3,4}}\left(\hat{D}_{i1knrm}-\hat{D}_{i2kmrn}\right)+\\
&+ \sum\limits_{n=3,4}\left(\hat{D}_{i2k1rn}-\hat{D}_{i1knr1}\right)+ \sum\limits_{\substack{m=1,2 \\ n=2,3,4 }}\left(\hat{D}_{i2knrm}-\hat{D}_{i1kmrn}\right),
\end{split}
\end{equation}
and likewise
\begin{equation}
\begin{split}
\gamma_\star D&= \sum\limits_{n=1,2}\left(\hat{D}_{1jnl1s}-\hat{D}_{2j1lns}\right) +\sum\limits_{\substack{n=3,4 \\ m=2,3,4 }}\left(\hat{D}_{1jnlms}-\hat{D}_{2jmlns}\right)+\\
&+ \sum\limits_{n=3,4}\left(\hat{D}_{2j1lns}-\hat{D}_{1jnl1s}\right)+\sum\limits_{\substack{m=1,2 \\ n=2,3,4 }}\left(\hat{D}_{2jnlms}-\hat{D}_{1jmlns}\right).
\end{split}
\end{equation}
Therefore, a straightforward comparison shows that $D$ anticommutes with $\gamma_\star$ if and only if it is of the claimed form.
\end{proof}

\section{Pseudo-Riemannian Structures}
Let us recall that a pseudo-Riemannian spectral triple $(\mathcal{A},\mathcal{H},\mathcal{D},\gamma, J,\beta)$, of signature $(p,q)$, is a system consisting of an algebra $\mathcal{A}$, Hilbert space $\mathcal{H}$, Dirac operator $\mathcal{D}$, $\mathbb{Z}/2\mathbb{Z}$-grading $\gamma$, real structure $J$ and an additional grading $\beta\in\mathrm{End}(\mathcal{H})$ such that $\beta^\ast=\beta$, $\beta^2=1$ and which commutes with the representation of $\mathcal{A}$ and defines a Krein structure on the Hilbert space. These objects are supposed to satisfy several conditions that are collected in \cite{bs}, section II.
For our purposes it is enough to recall that $\beta$ has to satisfy $\beta \gamma =(-1)^p\gamma \beta$ and $\beta J=(-1)^{\frac{p(p-1)}{2}}\epsilon^p J\beta$, where $\mathcal{D}J=\epsilon J\mathcal{D}$ and $\epsilon=\pm 1$ depending on the KO-dimension of the triple.

Furthermore, we assume that $\mathcal{D}$ is $\beta$-selfadjoint, i.e. $\mathcal{D}^\ast=(-1)^p\beta \mathcal{D}\beta$. We say that the triple is time-oriented if $\beta$ can be presented as the image of a certain Hochschild $p$-cycle. 

Out of the pseudo-Riemannian spectral triple $(\mathcal{A},\mathcal{H},\mathcal{D},\gamma, J,\beta)$ one can construct its Riemannian restriction, i.e. a triple $(\mathcal{A},\mathcal{H},D_+,\gamma, J,\beta)$ with $\mathcal{D}_+=\frac{1}{2}(\mathcal{D}+\mathcal{D}^\ast)$ which is a \mbox{self-adjoint} operator and $\beta \mathcal{D}_+=(-1)^p\mathcal{D}_+\beta$. This spectral triple is of the same KO-dimension as the one we started with.

As was noticed in \cite{bs} the spectral triple for the Standard Model can be treated as a Riemannian restriction of some pseudo-Riemannian triple, with an additional grading originating from the time-orientation on the pseudo-Riemannian level. The existence of such a gradation results in a restriction on the number of possible Dirac operators, compatible with the other elements of the triple. 

Here we are looking for similar effects in the case of Pati-Salam models. From now on we will denote $\mathcal{D}_+$ by $D$, and since the spectral triple for the Pati-Salam models has to be of KO-dimension $6$, we take the signature to be $(0,2)$. We are looking for all possible $\beta$s and (self-adjoint) Dirac operators $D$ such that $\beta D=D\beta$.

Therefore we are looking for $\beta$ of the form 
\begin{equation}
\beta=\pi(q_1,q_2,m)J\pi(q_1',q_2',m')^\ast J^{-1},
\end{equation}
with $q_1,q_2,q_1',q_2'\in\mathbb{H}$ and $m,m'\in M_4(\mathbb{C})$ for the unreduced case, and 
\begin{equation}
\beta=\pi(q_1,q_2,\lambda,n)J\pi(q_1',q_2',\lambda',n')^\ast J^{-1},
\end{equation}
with $q_1,q_2,q_1',q_2'\in\mathbb{H}$, $\lambda,\lambda'\in \mathbb{C}$ and $n,n'\in M_3(\mathbb{C})$ in the reduced case.

For simplicity we have assumed here that $\beta$ is $0$-cycle containing only one summand. We postpone the discussion about the more general situation until section \ref{comms}.

Moreover, we require that 
\begin{equation}
\beta\gamma=\gamma\beta, \qquad \beta J=J\beta.
\end{equation}
\subsection{The unreduced Pati-Salam model \label{unreducedPS}}
Let us start with this case first. Then $\beta$ can be represented as 
\begin{equation}
\beta=\begin{bmatrix}
q_1 & \\
& q_2
\end{bmatrix}\otimes e_{11}\otimes m'^t +m\otimes e_{22}\otimes\begin{bmatrix}
q_1'^t & \\
& q_2'^t
\end{bmatrix}.
\end{equation}
Since $\beta$ is a $0$-cycle, it commutes with the grading by construction. Notice the fact that $\beta$ commutes both with the algebra and the opposite algebra (since $\beta J=J\beta$ and the order zero condition holds) fixes all matrices $q_1,q_2,q_1',q_2'$ and $m,m'$ to be proportional to the identity, and moreover, it enforces $m=m'^t$ and $q_i=q_i'^t$ for $i=1,2$. The condition $\beta^2=1$ fixes all these proportionality factors to be a sign. 

Therefore, the only possible pseudo-Riemannian structures are
\begin{equation}
\beta=\pi(\eta_1 1_2,\eta_2 1_2,\eta_3 1_4)J\pi(\eta_1 1_2,\eta_2 1_2,\eta_3 1_4)J^{-1}
\end{equation}
with $\eta_1,\eta_2,\eta_3=\pm1$. 

So, up to the trivial rescaling by a factor of $-1$ there are only two such possible operators:
\begin{equation}
\pi(1_2,1_2,1_4)J\pi(1_2,1_2,1_4)J^{-1}, \quad \mbox{and} \quad  \pi(1_2,-1_2,1_4)J\pi(1_2,-1_2,1_4)J^{-1}.
\end{equation}

\subsubsection{Compatible Dirac operators for the unreduced Pati-Salam model}

We are looking for all possible generic Dirac operators $D$ (not necessarily anticommuting with a grading) such that  $D\beta=\beta D$. Moreover, we already assume that $D$ commutes with $J$, so it is of the form $D_0+JD_0J^{-1}$. For such a $D$, in each of the cases for $\beta$, we get the following restrictions.

First, notice that $\beta=\pi(1_2,1_2,1_4)J\pi(1_2,1_2,1_4)J^{-1}=1_{32}$ is the identity operator, so it commutes with everything. This is the trivial case in which we are not interested.

For the second choice we get the following
\begin{proposition}
\label{nontrivial}The Dirac operator $D=D_0+JD_0J^{-1}$ commutes with 
\begin{equation}
\beta=\pi(1_2,-1_2,1_4)J\pi(1_2,-1_2,1_4)J^{-1}
\end{equation} 
if and only if $D_0$ is of the following form
\begin{equation}
\begin{split}
D_0&=\sum\limits_{k}\left\{\begin{bmatrix}
 \widetilde{X}_k &0_2\\
0_2&\widetilde{Y}_k 
\end{bmatrix}\otimes e_{11}\otimes \widetilde{A}_k + \begin{bmatrix}
\widetilde{P}_k& \widetilde{Q}_k \\
 0_2&0_2
\end{bmatrix}\otimes e_{12}\otimes \begin{bmatrix}
\widetilde{Z}_k &0_2 \\
\widetilde{T}_k&0_2
\end{bmatrix}  + \right. \\ &+ \left.   \begin{bmatrix}
0_2& 0_2\\
\widetilde{U}_k& \widetilde{V}_k 
\end{bmatrix}\otimes e_{12}\otimes \begin{bmatrix}
0_2& \widetilde{W}_k \\
0_2& \widetilde{S}_k
\end{bmatrix} \right\},
\end{split}
\end{equation}
where $\widetilde{X}_k, \widetilde{Y}_k, \widetilde{P}_k,\widetilde{Q}_k, \widetilde{Z}_k, \widetilde{T}_k, \widetilde{U}_k, \widetilde{V}_k, \widetilde{W}_k, \widetilde{S}_k\in M_2(\mathbb{C})$ and  $\widetilde{A}_k\in M_4(\mathbb{C})$.
\end{proposition}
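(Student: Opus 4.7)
The strategy is to compute $\beta$ explicitly, reduce $[\beta, D] = 0$ to $[\beta, D_0] = 0$ via the preceding lemma (which applies because $[\beta, J] = 0$), and then analyse the $e_{11}$ and $e_{12}$ sectors of $D_0$ separately using the eigenspace structure of left- and right-multiplication by $\Lambda := \mathrm{diag}(1_2, -1_2)$ on $M_4(\mathbb{C})$. Applying the opposite-representation formula from Section 2.1 to the real, self-adjoint element $\pi(1_2, -1_2, 1_4)$ gives, after multiplication, $\beta = \Lambda \otimes e_{11} \otimes 1_4 + 1_4 \otimes e_{22} \otimes \Lambda$. Because $e_{11}e_{22} = e_{22}e_{11} = 0$ while $e_{11}e_{12} = e_{12}e_{22} = e_{12}$, the condition $[\beta, D_0] = 0$ decouples into independent constraints on $D_{11}$ and $D_{12}$; the $JD_0 J^{-1}$ piece of $D$ is controlled automatically, since $[\beta, J] = 0$ yields $[\beta, JD_0 J^{-1}] = J[\beta, D_0] J^{-1}$.

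For $D_{11}$ a short calculation gives $[\beta, D_{11}] = \sum_k [\Lambda, D_{111}^k] \otimes e_{11} \otimes D_{211}^k$, so that $D_{11}$, viewed as an element of $M_4(\mathbb{C}) \otimes M_4(\mathbb{C})$, must lie in $(\{\Lambda\}' \cap M_4(\mathbb{C})) \otimes M_4(\mathbb{C})$. The centraliser of $\Lambda$ is the block-diagonal subalgebra $M_2(\mathbb{C}) \oplus M_2(\mathbb{C})$, which recovers the first summand of the claimed form with $\widetilde{A}_k$ arbitrary.

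The $e_{12}$ sector carries the real content and is the main obstacle. The vanishing of $[\beta, D_{12}]$ amounts to the identity $\sum_k \Lambda D_{112}^k \otimes D_{212}^k = \sum_k D_{112}^k \otimes D_{212}^k \Lambda$, which says that $D_{12}$ lies in the kernel of the endomorphism $L \otimes \mathrm{id} - \mathrm{id} \otimes R$ of $M_4(\mathbb{C}) \otimes M_4(\mathbb{C})$, where $L$ and $R$ denote left- and right-multiplication by $\Lambda$. Since $\Lambda^2 = 1$, both $L$ and $R$ are involutions: the $\pm 1$ eigenspaces of $L$ consist of matrices supported respectively in the top two or bottom two rows, while those of $R$ consist of matrices supported respectively in the left two or right two columns. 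The kernel of $L \otimes \mathrm{id} - \mathrm{id} \otimes R$ is therefore the direct sum of two subspaces: one in which $D_{112}^k$ is top-row-supported and $D_{212}^k$ is left-column-supported, and one in which $D_{112}^k$ is bottom-row-supported and $D_{212}^k$ is right-column-supported. Writing an arbitrary element of each subspace as a sum of simple tensors yields exactly the second and third summands in the proposition. The key delicate point is the joint-eigenspace analysis itself: one must use that the four joint-eigenspace components are linearly independent inside $M_4(\mathbb{C}) \otimes M_4(\mathbb{C})$, which is immediate once one notes that the row- and column-support constraints carve $M_4(\mathbb{C})$ into four disjointly supported summands. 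Sufficiency in all three cases reduces to the same block multiplication already carried out, so the characterisation is tight.
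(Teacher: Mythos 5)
Your proposal is correct and in substance follows the same route as the paper: both reduce everything to the explicit block form $\beta=\Lambda\otimes e_{11}\otimes 1_4+1_4\otimes e_{22}\otimes\Lambda$ with $\Lambda=\mathrm{diag}(1_2,-1_2)$ and then determine directly which components of $D_0$ can commute with it, the $e_{11}$ and $e_{12}$ sectors decoupling because $\beta$ has only $e_{11}$ and $e_{22}$ legs. The only difference is bookkeeping: the paper compares the six-index components of $D\beta$ and $\beta D$ for the full $D$, whereas you first invoke the earlier lemma to pass to $D_0$ and then package the same comparison via the commutant of $\Lambda$ (giving the block-diagonal $e_{11}$ part) and the kernel of $L\otimes\mathrm{id}-\mathrm{id}\otimes R$, identified through the joint eigenspaces of these commuting involutions (giving the two $e_{12}$ summands) — a tidier organization of the identical computation.
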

\begin{proof}
We first write
\begin{equation}
\begin{split}
\beta=& \sum\limits_{n=1,2}e_{nn}\otimes e_{11}\otimes e_{mm}-\sum\limits_{n=3,4} e_{nn}\otimes e_{11}\otimes e_{mm}+\\
+&\sum\limits_{m=1,2} e_{nn}\otimes e_{22}\otimes e_{mm}-\sum\limits_{m=3,4} e_{nn}\otimes e_{22}\otimes e_{mm}.
\end{split}
\end{equation}
Noticing that for $D=\sum \hat{D}_{ijklrs}$ we have
\begin{equation}
D\beta=\sum\limits_{n=1,2}\hat{D}_{i1knrm}-\sum\limits_{n=3,4}\hat{D}_{i1knrm}+ \sum\limits_{m=1,2}\hat{D}_{i2knrm} -\sum\limits_{m=3,4}\hat{D}_{i2knrm},
\end{equation}
and similarly
\begin{equation}
\beta D=\sum\limits_{n=1,2}\hat{D}_{1jnlms}-\sum\limits_{n=3,4}\hat{D}_{1jnlms}+\sum\limits_{m=1,2}\hat{D}_{2jnlms} - \sum\limits_{m=3,4}\hat{D}_{2jnlms}.
\end{equation}
The result follows from a straightforward comparison of these expressions. 
\end{proof}
\subsubsection{Physical consequences of the unreduced Pati-Salam model with $\gamma$}

Since $D$ anticommutes with $\gamma$, we see that the only possibility for such a $D$ to commute with the nontrivial $\beta$ discussed above is that terms of the form $\cdots \otimes e_{11}\otimes \cdots$ and $\cdots \otimes e_{22} \otimes \cdots$ must vanish. Notice that only models which contain these terms are physically acceptable as extensions of the Standard Model, for which case the Dirac operator has to contain terms $\begin{bmatrix}
&M_l \\
M_l^\dagger 
\end{bmatrix}\otimes e_{11}\otimes e_{11}$ and $\begin{bmatrix}
&M_q \\
M_q^\dagger 
\end{bmatrix}\otimes e_{11}\otimes (1- e_{11})$, which encode the Yukawa parameters for leptons and quarks. Therefore, no Pati-Salam model with the algebra $\mathbb{H}_R\oplus\mathbb{H}_L\oplus M_4(\mathbb{C})$, grading $\gamma$, and with the pseudo-Riemannian structure $\beta$ is physically acceptable.

\subsection{The reduced Pati-Salam model  \label{reducedPS}}
In this case
\begin{equation}
\beta= \begin{bmatrix}
q_1 & \\
& q_2
\end{bmatrix}\otimes e_{11}\otimes\begin{bmatrix}
\lambda'& \\
& n'^t
\end{bmatrix}  + \begin{bmatrix}
\lambda& \\
& n
\end{bmatrix}\otimes e_{22}\otimes \begin{bmatrix}
q_1'^t & \\
& q_2'^t
\end{bmatrix}.
\end{equation}
As before, since $\beta$ is a $0$-cycle, it commutes with the grading, and 
commutation with the algebra, the fact that $\beta^2=1$ and that it commutes with $J$ fixes all matrices $q_i=q_i'=\pm 1_2$, for $i=1,2$, $\lambda=\lambda'=\pm 1$ and $n=n'=\pm 1_3$. Therefore, up to trivial rescaling we have the following four cases
\begin{equation}
\begin{split}
\pi(1_2,1_2,1,1_3)J\pi(1_2,1_2,1,1_3)J^{-1}, & \quad  \pi(1_2,-1_2,1,1_3)J\pi(1_2,-1_2,1,1_3)J^{-1},\\
\pi(1_2,1_2,1,-1_3)J\pi(1_2,1_2,1,-1_3)J^{-1}, &\quad  \pi(-1_2,1_2,1,-1_3)J\pi(-1_2,1_2,1,-1_3)J^{-1}.
\end{split}
\end{equation}
Only three of them are nontrivial.
\subsubsection{Compatible Dirac operators for the reduced Pati-Salam model}

Now, we will discuss restrictions on a generic Dirac operator (not necessary anticommuting with a grading) which follow from commutation with the nontrivial $\beta$s allowed in the case of the reduced Pati-Salam model. As before, we assume that $D$ commutes with the real structure so that it is of the form $D_0+JD_0J^{-1}$. The case with $\beta=\pi(1_2,-1_2,1,1_3)J\pi(1_2,-1_2,1,1_3)J^{-1}$ is exactly the same as the one discussed in subsection \ref{nontrivial}. For the two remaining cases we get the following results. Firstly, we have 
\begin{proposition}
\label{case2}
The Dirac operator $D=D_0+JD_0J^{-1}$ commutes with 
\begin{equation}
\beta=\pi(-1_2,1_2,1,-1_3)J\pi(-1_2,1_2,1,-1_3)J^{-1}
\end{equation} if and only if $D_0$ is of the form 
\begin{equation}
\begin{split}
D_0=& \sum_k\left\{\begin{bmatrix}
\widetilde{X}_k &  \\
 &\widetilde{Y}_k
\end{bmatrix}\otimes e_{11}\otimes  \begin{bmatrix}
\widetilde{\gamma}_k & \\
& \widetilde{C}_k
\end{bmatrix}+\begin{bmatrix}
 & \widetilde{Z}_k\\
\widetilde{T}_k &
\end{bmatrix}\otimes e_{11}\otimes \begin{bmatrix}
& \widetilde{\alpha}_{1k} & \widetilde{\alpha}_{2k} & \widetilde{\alpha}_{3k} \\
\widetilde{\beta}_{1k} &  &  &\\
\widetilde{\beta}_{2k} &  &  &\\
\widetilde{\beta}_{3k} &  &  &
\end{bmatrix} +\right. \\  &+ \left.
\begin{bmatrix}
\widetilde{\delta}_{1k} &\\
\widetilde{\delta}_{2k} &\\
& \widetilde{P}_k
\end{bmatrix}\otimes e_{12} \otimes \begin{bmatrix}
\widetilde{\sigma}_{1k} & \widetilde{\sigma}_{2k} &\\
& & \widetilde{E}_k
\end{bmatrix}+\begin{bmatrix}
& \widetilde{F}_k \\
\widetilde{\mu}_{1k} &\\
\widetilde{\mu}_{2k} &
\end{bmatrix}\otimes e_{12}\otimes \begin{bmatrix}
&\widetilde{\nu}_{1k} &\widetilde{\nu}_{2k}\\
\widetilde{G}_k & &
\end{bmatrix}  \right\},
\end{split}
\end{equation}
where $\widetilde{X}_k,\widetilde{Y}_k,\widetilde{Z}_k,\widetilde{T}_k\in M_2(\mathbb{C})$, $\widetilde{E}_k,\widetilde{G}_k\in M_{3\times 2}(\mathbb{C})$, $\widetilde{P}_k,\widetilde{F}_k\in M_{2\times 3}(\mathbb{C})$,  $\widetilde{C}_k\in M_3(\mathbb{C})$, and $\widetilde{\alpha}_{lk},\widetilde{\beta}_{lk},\widetilde{\gamma}_k, \widetilde{\delta}_{lk},\widetilde{\sigma}_{lk},\widetilde{\mu}_{lk},\widetilde{\nu}_{lk}\in \mathbb{C}$.
\end{proposition}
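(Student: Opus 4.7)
The plan is to follow the strategy of Proposition~\ref{nontrivial}, adapted to the more intricate sign pattern arising here. Since the present $\beta$ has the form $\beta=\beta_{11}+\beta_{22}$ and satisfies $\beta J=J\beta$, the lemma preceding Section~2.4 reduces $[D,\beta]=0$ to $[D_0,\beta]=0$, so it suffices to analyze how $\beta$ commutes with the $e_{11}$ and $e_{12}$ parts of $D_0$ separately.

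Next, I would expand $\beta$ explicitly. Carrying out the multiplication $\pi(-1_2,1_2,1,-1_3)\cdot J\pi(-1_2,1_2,1,-1_3)^{\ast}J^{-1}$ gives
\[
\beta = \begin{bmatrix} -1_2 & \\ & 1_2 \end{bmatrix}\otimes e_{11}\otimes \begin{bmatrix} 1 & \\ & -1_3 \end{bmatrix} + \begin{bmatrix} 1 & \\ & -1_3 \end{bmatrix}\otimes e_{22}\otimes \begin{bmatrix} -1_2 & \\ & 1_2 \end{bmatrix},
\]
which further expands as $\sum_{n,m}S_1(n,m)\,e_{nn}\otimes e_{11}\otimes e_{mm}+\sum_{n,m}S_2(n,m)\,e_{nn}\otimes e_{22}\otimes e_{mm}$ for explicit sign functions $S_1,S_2\colon\{1,2,3,4\}^2\to\{\pm 1\}$ read off from the diagonal entries of the tensor factors.

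With $D_0=\sum\left(D_{11klrs}\,e_{kl}\otimes e_{11}\otimes e_{rs}+D_{12klrs}\,e_{kl}\otimes e_{12}\otimes e_{rs}\right)$, I would multiply term by term using $e_{ab}e_{cd}=\delta_{bc}e_{ad}$ on each factor, together with the relations $e_{11}e_{12}=e_{12}e_{22}=e_{12}$ and $e_{11}e_{22}=e_{22}e_{12}=e_{12}e_{11}=0$. A direct computation then shows that the $e_{11}$-coefficients of $D_0\beta$ and $\beta D_0$ are $D_{11klrs}S_1(l,s)$ and $D_{11klrs}S_1(k,r)$ respectively, while the $e_{12}$-coefficients are $D_{12klrs}S_2(l,s)$ and $D_{12klrs}S_1(k,r)$. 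Commutation therefore forces $D_{11klrs}=0$ unless $S_1(k,r)=S_1(l,s)$, and $D_{12klrs}=0$ unless $S_1(k,r)=S_2(l,s)$.

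The final step is to translate these selection rules into the claimed block form. Both sign functions factor through two-element partitions of $\{1,2,3,4\}$: $S_1(k,r)$ depends only on whether $k\in\{1,2\}$ or $k\in\{3,4\}$, and whether $r=1$ or $r\in\{2,3,4\}$, while $S_2(l,s)$ depends only on whether $l=1$ or $l\in\{2,3,4\}$, and whether $s\in\{1,2\}$ or $s\in\{3,4\}$. Enumerating the $8$ of $16$ block-combinations in which $S_1(k,r)=S_1(l,s)$ recovers exactly the union of the block-diagonal pattern encoded by $\widetilde{X}_k,\widetilde{Y}_k,\widetilde{\gamma}_k,\widetilde{C}_k$ and the anti-diagonal pattern encoded by $\widetilde{Z}_k,\widetilde{T}_k,\widetilde{\alpha}_{lk},\widetilde{\beta}_{lk}$; similarly, the $8$ block-combinations satisfying $S_1(k,r)=S_2(l,s)$ partition into two groups of four that assemble precisely into the two $e_{12}$-summands of the claim. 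The main obstacle is this bookkeeping: each individual case check is routine, but one has to organize the sixteen possible sign patterns systematically to verify that no allowed index tuple is missed and no disallowed one is included in the stated block shape.
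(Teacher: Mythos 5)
Your proposal is correct and follows essentially the same route as the paper: expand $\beta$ as a signed sum of diagonal matrix units, compute $D\beta$ and $\beta D$ componentwise on the $\hat{D}_{ijklrs}$, and read off the selection rules $S_1(k,r)=S_1(l,s)$ for the $e_{11}$ part and $S_1(k,r)=S_2(l,s)$ for the $e_{12}$ part, which reproduce exactly the stated block structure. The only cosmetic difference is that you first invoke the lemma to reduce to $D_0$ and phrase the comparison via explicit sign functions, whereas the paper carries all components $i,j$ through the same computation at once.
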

\begin{proof}
It follows from a straightforward computation that, for $D=\sum \hat{D}_{ijklrs}$, 
\begin{equation}
\begin{split}
D\beta=&-\sum\limits_{n=1,2}\hat{D}_{i1knr1}+\sum\limits_{\substack{n=1,2 \\ m=2,3,4}}\hat{D}_{i1knrm}+\sum\limits_{n=3,4}\hat{D}_{i1knr1}-\sum\limits_{\substack{n=3,4 \\ m=2,3,4}}\hat{D}_{i1knrm}-\\
&-\sum\limits_{m=1,2}\hat{D}_{i2k1rm}+\sum\limits_{m=3,4}\hat{D}_{i2k1rm}+\sum\limits_{\substack{n=2,3,4 \\ m=1,2}}\hat{D}_{i2knrm}-\sum\limits_{\substack{n=2,3,4\\ m=3,4}}\hat{D}_{i2knrm},
\end{split}
\end{equation}
and
\begin{equation}
\begin{split}
\beta D=&-\sum\limits_{n=1,2}\hat{D}_{1jnl1s}+\sum\limits_{\substack{n=1,2 \\ m=2,3,4}}\hat{D}_{1jnlms}+\sum\limits_{n=3,4}\hat{D}_{1jnl1s}-\sum\limits_{\substack{n=3,4 \\ m=2,3,4}}\hat{D}_{1jnlms} -\\
&-\sum\limits_{m=1,2}\hat{D}_{2j1lms}+\sum\limits_{m=3,4}\hat{D}_{2j1lms}+\sum\limits_{\substack{n=2,3,4 \\ m=1,2}}\hat{D}_{2jnlms}-\sum\limits_{\substack{n=2,3,4\\ m=3,4}}\hat{D}_{2jnlms}.
\end{split}
\end{equation}
Comparing these two expressions we get the claimed result.
\end{proof}

Similarly,
\begin{proposition}
\label{final}
The Dirac operator $D=D_0+JD_0J^{-1}$ commutes with \begin{equation}
\beta=\pi(1_2,1_2,1,-1_3)J\pi(1_2,1_2,1,-1_3)J^{-1}
\end{equation} if and only if $D_0$ is of the form
\begin{equation}
\begin{split}
D_0=& \sum_k\left\{\widetilde{A}^{(11)}_k\otimes e_{11}\otimes \begin{bmatrix}
\widetilde{\gamma}_k & \\
& \widetilde{C}_k
\end{bmatrix}  +
\begin{bmatrix}
\widetilde{\delta}_{1k} &&&\\
\widetilde{\delta}_{2k} &&0_{4\times3}&\\
\widetilde{\delta}_{3k} &&&\\
\widetilde{\delta}_{4k} &&&
\end{bmatrix}\otimes e_{12} \otimes \begin{bmatrix}
\widetilde{\sigma}_{1k} & \widetilde{\sigma}_{2k} &\widetilde{\sigma}_{3k} & \widetilde{\sigma}_{4k} \\
&&&\\
&&0_{3\times 4}&\\
&&&
\end{bmatrix}+\right.\\ &+\left.\begin{bmatrix}
0_{4\times 1}& \widetilde{F}_k 
\end{bmatrix}\otimes e_{12}\otimes \begin{bmatrix}
0_{1\times 4}\\
\widetilde{G}_k 
\end{bmatrix} \right\},
\end{split}
\end{equation}
where $\widetilde{A}_k\in M_4(\mathbb{C})$, $\widetilde{G}_k\in M_{3\times 4}(\mathbb{C})$, $\widetilde{F}_k\in M_{4\times 3}(\mathbb{C})$,  $\widetilde{C}_k\in M_3(\mathbb{C})$, and $\widetilde{\gamma}_k, \widetilde{\delta}_{lk},\widetilde{\sigma}_{lk}\in \mathbb{C}$.
\end{proposition}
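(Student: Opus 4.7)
The plan is to mirror the strategy used in the proof of Proposition \ref{case2}: first expand $\beta$ in the matrix-unit basis, then invoke the lemma that $[\beta,D]=0\iff[\beta,D_0]=0$ (applicable because $\beta J=J\beta$), and finally compare $D_0\beta$ with $\beta D_0$ coefficient by coefficient in the $e_{kl}\otimes e_{ij}\otimes e_{rs}$ basis.

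First I would compute $\beta=\pi(1_2,1_2,1,-1_3)\,J\pi(1_2,1_2,1,-1_3)^\ast J^{-1}$ explicitly. Using the formula for the opposite representation and $e_{11}e_{22}=0$, the cross terms cancel and one obtains
\[
\beta = 1_4\otimes e_{11}\otimes\mathrm{diag}(1,-1_3) + \mathrm{diag}(1,-1_3)\otimes e_{22}\otimes 1_4,
\]
which expands as
\[
\beta = \sum_n e_{nn}\otimes e_{11}\otimes e_{11} - \sum_{\substack{n \\ m=2,3,4}}e_{nn}\otimes e_{11}\otimes e_{mm} + \sum_m e_{11}\otimes e_{22}\otimes e_{mm} - \sum_{\substack{n=2,3,4 \\ m}}e_{nn}\otimes e_{22}\otimes e_{mm}.
\]
Writing $D_0=\sum\hat{D}_{ijklrs}$ as in the earlier propositions, the products $D_0\beta$ and $\beta D_0$ collapse into signed sums of symbols $\hat{D}_{i1knr1}$, $\hat{D}_{i1knrm}$, $\hat{D}_{i2k1rs}$, $\hat{D}_{i2knrs}$ (from $D_0\beta$) and their left-multiplied analogues $\hat{D}_{1jnl1s}$, $\hat{D}_{1jnlms}$, $\hat{D}_{2j1lms}$, $\hat{D}_{2jnlms}$ (from $\beta D_0$), in complete parallel with Proposition \ref{case2}.

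Equating the two expressions then splits into the two non-trivial sectors of $D_0$. In the $e_{11}$-middle sector ($i=j=1$), the condition decouples the first-factor indices $(k,l)$ from the third-factor indices $(r,s)$: $D_{11klrs}$ may be non-zero only when either $(r,s)=(1,1)$ or $(r,s)\in\{2,3,4\}\times\{2,3,4\}$, while $(k,l)$ is unrestricted; summing over these independent degrees of freedom reproduces the term $\widetilde{A}^{(11)}_k\otimes e_{11}\otimes\mathrm{diag}(\widetilde{\gamma}_k,\widetilde{C}_k)$. In the $e_{12}$-middle sector ($i=1,\,j=2$), the condition couples instead the column index $l$ of the first factor with the row index $r$ of the third: the admissible $(l,r)$-pairs are either $(1,1)$ or both in $\{2,3,4\}$, while $k$ and $s$ are free. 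The first alternative yields the rank-one ``first column tensored with first row'' summand parametrised by $\widetilde{\delta}_{lk},\widetilde{\sigma}_{lk}$, and the second yields the $[0_{4\times 1}\,|\,\widetilde{F}_k]\otimes e_{12}\otimes\bigl[\,0_{1\times 4};\,\widetilde{G}_k\,\bigr]$ summand. The $e_{21}$- and $e_{22}$-parts follow automatically from $D=D_0+JD_0J^{-1}$.

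The whole argument is a bookkeeping exercise with no genuine conceptual obstacle; the only point that demands care is the $e_{12}$-sector, where one must recognise that the admissible $(l,r)$-support pattern is the disjoint union of two rectangular blocks, which is precisely what forces two separate tensor summands (rather than a single block-diagonal one) in the stated form of $D_0$.
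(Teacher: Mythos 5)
Your proposal is correct and follows essentially the same route as the paper's proof: expand $\beta=1_4\otimes e_{11}\otimes\mathrm{diag}(1,-1_3)+\mathrm{diag}(1,-1_3)\otimes e_{22}\otimes 1_4$ in matrix units, compute $D\beta$ and $\beta D$ as signed sums of the $\hat{D}_{ijklrs}$, and read off the admissible support, the paper leaving this last comparison as ``straightforward.'' Your explicit identification of the allowed index patterns (the $(r,s)$ block structure in the $e_{11}$ sector and the coupled $(l,r)$ pattern in the $e_{12}$ sector, giving the two separate tensor summands) is exactly the bookkeeping the paper's proof implicitly performs.
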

\begin{proof}
In a similar manner to the previous cases we compute
\begin{equation}
D\beta=\sum \left(\hat{D}_{i1klr1}+\hat{D}_{i2k1rs}\right) -\sum\limits_{m=2,3,4}\left(\hat{D}_{i1klrm}+\hat{D}_{i2knrs}\right),
\end{equation}
and
\begin{equation}
\beta D= \sum \left(\hat{D}_{1jkl1s}+\hat{D}_{2j1lrs}\right)-\sum\limits_{m=2,3,4}\left(\hat{D}_{1jklms}+\hat{D}_{2jnlrs}\right).
\end{equation}
The result follows from a straightforward comparison of these terms.
\end{proof}

\subsubsection{Physical consequences of the reduced Pati-Salam model with $\gamma_\ast$}

Notice that since we require the Dirac operator to anticommute with the grading $\gamma_{\ast}$, and moreover any physically interesting model should be an extension of the Standard Model, we conclude that the only possibility is therefore the $\beta$ from Proposition \ref{final}, in which case anticommutation with $\gamma_{\ast}$ reduces the freedom of possible Dirac operators $D=D_0+JD_0J^{-1}$ to those with $D_0$ of the form
\begin{equation}
\begin{split}
D_0&=\sum\limits_{k}\left\{ \begin{bmatrix}
& Z_k\\
T_k &
\end{bmatrix}\otimes e_{11}\otimes \begin{bmatrix}
\gamma_k & \\
& C_k
\end{bmatrix}  
\right. +\\
&+\left. \begin{bmatrix}
\delta_{1k} & & \\
\delta_{2k} & 0_{4\times 3} \\
0&&\\
0&&
\end{bmatrix}\otimes e_{12}\otimes \begin{bmatrix}
\sigma_{1k}& \sigma_{2k} &0 &0 \\
&&&\\
&0_{3\times 4}&&
\end{bmatrix} +
\begin{bmatrix}
0_{2\times 1} & 0_{2\times 3} \\
0_{2\times 1}& E_{1k}
\end{bmatrix}\otimes e_{12}\otimes\begin{bmatrix}
0_{1\times 2} & 0_{1\times 2} \\
0_{3\times 2}& F_{1k}
\end{bmatrix}\right.\\
&+\left.
\begin{bmatrix}
0 & & \\
0 & 0_{4\times 3} \\
\delta_{3k}&&\\
\delta_{4k}&&
\end{bmatrix}\otimes e_{12}\otimes \begin{bmatrix}
0&0&\sigma_{3k} & \sigma_{4k}\\
&&&\\
&&0_{3\times 4}&
\end{bmatrix}+\begin{bmatrix}
0_{2\times 1} & E_{2k}\\
0_{2\times 1}& 0_{2\times 3}
\end{bmatrix}\otimes e_{12}\otimes\begin{bmatrix}
0_{1\times 2} & 0_{1\times 2} \\
F_{2k}&0_{3\times 2}
\end{bmatrix}\right\},
\end{split}
\label{Dirac_gamma}
\end{equation} 
where $T_k,Z_k\in M_2(\mathbb{C})$, $C_{k}\in M_3(\mathbb{C})$, $E_{lk}\in M_{2\times 3}(\mathbb{C})$, $F_{lk}\in M_{3\times 2}(\mathbb{C})$ and $\gamma_k,\sigma
_{lk},\delta_{lk}\in\mathbb{C}$.

We can treat this model as an extension of the Standard Model with modified chiralities, i.e. in which left-handed (resp. right-handed) leptons have the same parity as right-handed (resp. left-handed) quarks. Therefore, the only compatible extension beyond the Standard Model and contained within the family of Pati-Salam models which have $\gamma_\ast$ as a grading, and possesses a pseudo-Riemannian structure in the sense of the existence of a one-term $0$-cycle $\beta$, is precisely the reduced Pati-Salam model with exactly the same pseudo-Riemannian structure which was uniquely possible in the case of the Standard Model \cite{bs}. Since $\gamma_\ast$ explicitly breaks the $\mathrm{SU(4)}$-symmetry into $\mathrm{U(1)}\times \mathrm{U(3)}$, it is not surprising that the resulting class of models also has this property. Nevertheless, it is worth noting that there is exactly one (up to an irrelevant global sign in $\beta$) such possibility, and moreover that it is consistent with the one that is known to be the only possibility in the case of the Standard Model. Furthermore, we observe that the pseudo-Riemannian structure still allows for the existence of $\mathrm{SU(2)}$-doublets of right-handed particles. This is an interesting feature. Notice also, that there are further restrictions on the entries of the compatible Dirac operators which follow from the assumption of self-adjointness. These restrictions are summarized in Proposition \ref{selfadj}.
\subsubsection{Physical consequences of the reduced Pati-Salam model with $\gamma$}

Since the grading $\gamma$ is compatible with the unreduced Pati-Salam model, it is also compatible with the reduced model. Therefore we can consider Dirac operators that anticommute with $\gamma$ and commute with $\beta$ for this case. For the nontrivial $\beta$s we see that only one of them, i.e. $\beta$ from Proposition \ref{final}, is compatible with the requirement of being an extension of the Standard Model. For it we have the following
 \begin{proposition}
 The Dirac operator $D=D_0+JD_0J^{-1}$ that commutes with $\beta$ and anticommutes with $\gamma$ has to be of the same form as in \eqref{Dirac_gamma}, i.e. exactly the same form as in the case with $\gamma_\ast$.
\end{proposition}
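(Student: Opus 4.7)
The plan is to intersect two characterizations that are already in hand: the form of $D_0$ anticommuting with $\gamma$ (derived in the proposition in the $\gamma$ subsection) and the form of $D_0$ commuting with $\beta = \pi(1_2,1_2,1,-1_3)J\pi(1_2,1_2,1,-1_3)J^{-1}$ (Proposition \ref{final}). Both conditions involve $D_0$ alone, so the problem reduces to intersecting the allowed supports summand by summand, and no fresh computation with $D\beta$ or $\gamma D$ is needed.

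For the $e_{11}$-summand, $\gamma$-anticommutation forces the first tensor factor to be block off-diagonal of type $\begin{bmatrix} 0 & X_k \\ Y_k & 0 \end{bmatrix}$, while Proposition \ref{final} requires the third factor to be $1\oplus 3$ block-diagonal of type $\begin{bmatrix} \gamma_k & \\ & C_k \end{bmatrix}$; intersecting, one recovers exactly the first summand of \eqref{Dirac_gamma}. For the $e_{12}$-summands the $\gamma$-anticommutation permits two complementary patterns (upper two rows of the first factor paired with left two columns of the third factor, or lower two rows paired with right two columns), while Proposition \ref{final} also permits two patterns (first column of the first factor paired with first row of the third factor, or last three columns paired with last three rows).

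Since the $\gamma$-constraint on the first tensor factor is a row restriction and the $\beta$-constraint is a column restriction, and symmetrically for the third factor, the two constraints are independent and their intersection is the entry-wise intersection of supports. The four resulting pairings reproduce the four remaining summands of \eqref{Dirac_gamma}: upper rows with first column yields the $\delta_{1k},\delta_{2k},\sigma_{1k},\sigma_{2k}$ piece, lower rows with first column yields the $\delta_{3k},\delta_{4k},\sigma_{3k},\sigma_{4k}$ piece, upper rows with the last three columns yields the $E_{2k},F_{2k}$ block pair of sizes $2\times 3$ and $3\times 2$, and lower rows with the last three columns yields $E_{1k},F_{1k}$. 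The converse is immediate since each listed summand manifestly satisfies both $\{D,\gamma\}=0$ and $[D,\beta]=0$. The only mild obstacle is the index bookkeeping needed to match the four sub-patterns to the four summands of \eqref{Dirac_gamma}; beyond this there is no analytic difficulty, and the agreement with the $\gamma_\ast$ case simply reflects that both $\gamma$ and $\gamma_\ast$ interact with this particular $\beta$ through the same row/column sign pattern on the first and third tensor factors.
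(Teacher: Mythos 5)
Your proposal is correct and follows essentially the route the paper leaves implicit: the paper states this proposition without a separate proof precisely because, as you note, both constraints are pure vanishing-pattern conditions on the coefficients $D_{1jklrs}$ (the $\gamma$-condition ties the row index of the first factor to the column index of the third, the $\beta$-condition ties the column index of the first factor to the row index of the third), so the allowed $D_0$ is the entry-wise intersection of the two previously derived forms, and your bookkeeping matching the four $e_{12}$ patterns to the four summands of \eqref{Dirac_gamma} is accurate. As a minor addition, the coincidence with the $\gamma_\ast$ case can be seen even more directly from the identity $\gamma_\ast=\gamma\beta=\beta\gamma$ together with $\beta^2=1$, so that on operators commuting with $\beta$ anticommutation with $\gamma$ and with $\gamma_\ast$ are equivalent conditions.
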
 
 Moreover, notice that the only $\beta$ which is admissible in this case is exactly the same one that prevented the existence of leptoquarks in the Standard Model \cite{bs}. Therefore, the only possible extension of the Standard Model contained within the family of Pati-Salam models, which takes into account the pseudo-Riemannian structure for finite triples in the sense defined in \cite{bs}, has to be of the reduced form. That is, the $\mathrm{SU(4)}$-symmetry is broken into a $\mathrm{U(1)}\times \mathrm{U(3)}$-symmetry. Therefore, instead of the full $\mathrm{SU(2)}_R\times \mathrm{SU(2)}_L\times \mathrm{SU(4)}$ Pati-Salam gauge group we must reduce to the case with $\mathrm{SU(2)}_R\times \mathrm{SU(2)}_L\times \mathrm{U(1)}\times \mathrm{U(3)}$. In this extension the right particles are doublets under the $\mathrm{SU(2)}$-symmetry, and leptons are separated from quarks, i.e. they are not the fourth color, so there are no leptoquarks in the sense of degrees of freedom. In a similar manner to the previous case, there are further restrictions on the entries of the compatible Dirac operators which follow from the assumption of self-adjointness --- see Proposition \ref{selfadj}.
\subsection{Generic $\beta$-structures}
\label{comms}
Here we are looking for all possible $\beta$s that are $0$-cycles and which satisfy all required conditions but we do not assume these operators to consist of only one term.
\subsubsection{The unreduced Pati-Salam model}
Since we require that $\beta$ commutes with the representation of the unreduced Pati-Salam algebra it follows from Proposition \ref{unreduced_comm} that 
\begin{equation}
\beta=\begin{bmatrix}
1_2 & \\
& 0_2
\end{bmatrix}\otimes e_{11}\otimes E_1 + \begin{bmatrix}
0_2 & \\
& 1_2
\end{bmatrix}\otimes e_{11} \otimes E_2 +1_4\otimes e_{22}\otimes F,
\end{equation} 
where $E_1,E_2,F\in M_4(\mathbb{C})$. 

Since $\beta^2=1$, we get
\begin{equation}
F^2=1_4, \qquad\begin{bmatrix}
1_2&\\ 
& 0_2
\end{bmatrix}\otimes E_1^2+\begin{bmatrix}
0_2&\\ 
& 1_2
\end{bmatrix}\otimes E_2^2=1_{16},
\end{equation}
hence $E_1^2=E_2^2=1_4$.
Since $\beta$ commutes with $J$, this implies that 
\begin{equation}
\label{cond1}
\bar{F}\otimes 1_4= \begin{bmatrix}
1_2&\\ 
& 0_2
\end{bmatrix}\otimes E_1+ \begin{bmatrix}
0_2&\\ 
& 1_2
\end{bmatrix}\otimes E_2. 
\end{equation}
Let us write $F=\begin{bmatrix}
F^{11} & F^{12}\\
F^{21} & F^{22}
\end{bmatrix} $, then \eqref{cond1} is equivalent to the following set of conditions
\begin{equation}
\begin{bmatrix}
\overline{F_{11}} & \overline{F_{12}}\\
\overline{F_{21}} & \overline{F_{22}}
\end{bmatrix}\otimes 1_4=\begin{bmatrix}
1_2\otimes E_1 & \\
& 1_2\otimes E_2
\end{bmatrix}
\end{equation}
Therefore $\overline{F_{12}}=\overline{F_{21}}=0_2$, $\overline{F_{11}}\otimes 1_4=1_2\otimes E_1$ and $\overline{F_{22}}\otimes 1_4=1_2\otimes E_2$, so 
\begin{equation}
E_1=\eta_11_4, \qquad \overline{F_{11}}=\eta_1 1_2, \qquad E_2=\eta_2 1_4, \qquad \overline{F_{22}}=\eta_2 1_2,
\end{equation}
for some nonzero complex numbers $\eta_{1}$ and $\eta_{2}$. Notice that since $\beta^\ast =\beta$ and $\beta^2=1$ the zero solutions are not allowed, and moreover we deduce that both $\eta_{1}=\pm 1$ and $\eta_{2}=\pm 1$. Therefore,
\begin{equation}
\beta=\begin{bmatrix}
\eta_1 1_2 & \\
&\eta_2 1_2
\end{bmatrix}\otimes e_{11}  \otimes 1_4 +1_4\otimes e_{22}\otimes \begin{bmatrix}
\eta_11_2 & \\
& \eta_21_2
\end{bmatrix},
\end{equation}
for some $\eta_1,\eta_2$ being $\pm 1$. 

Notice that all such $\beta$s are $0$-cycles, and there are (up to a trivial rescaling) only two possibilities:
\begin{equation}
\begin{split}
\pi(1_2,1_2,1_4)J\pi(1_2,1_2,1_4)J^{-1}, &\quad  \pi(1_2,-1_2,1_4)J\pi(1_2,-1_2,1_4)J^{-1}.
\end{split}
\end{equation}
These are exactly the same as under the assumption of $\beta$ being only a one-term $0$-cycle, i.e. the conclusion of section \ref{unreducedPS} remains valid for more general $\beta$s.
\subsubsection{The reduced Pati-Salam model}
In a similar manner, since we require that $\beta$ commutes with the representation of the reduced Pati-Salam algebra, it follows from Proposition \ref{reduced_comm} that 
\begin{equation}
\beta=\begin{bmatrix}
1_2& \\
& 0
\end{bmatrix}\otimes e_{11}\otimes E_1 +\begin{bmatrix}
0& \\
&1_2
\end{bmatrix}\otimes e_{11}\otimes E_2+\begin{bmatrix}
1&\\
& 0_3
\end{bmatrix}\otimes e_{22}\otimes F_1 +\begin{bmatrix}
0 & \\
& 1_3
\end{bmatrix}\otimes e_{22}\otimes F_2,
\end{equation}
where $E_1,E_2,F_1,F_2\in M_4(\mathbb{C})$. Since $\beta^2=1$, we infer that
\begin{equation}
E_1^2=E_2^2=1_4=F_1^2=F_2^2.
\end{equation}
Now, from the condition $\beta J=J \beta$, repeating the previously used argument, we end up with the following form of $\beta$:
\begin{equation}
\begin{split}
\beta &=\begin{bmatrix}
1_2 & \\
& 0_2
\end{bmatrix} \otimes e_{11} \otimes \begin{bmatrix}
\eta_1 & \\
& \eta_2 1_3
\end{bmatrix} +\begin{bmatrix}
0_2 &\\
& 1_2
\end{bmatrix}\otimes e_{11}\otimes \begin{bmatrix}
\eta_3 & \\
& \eta_4 1_3
\end{bmatrix}+ \\
&+\begin{bmatrix}
1& \\
& 0_3
\end{bmatrix}\otimes e_{22}\otimes \begin{bmatrix}
\eta_1 1_2 & \\
& \eta_3 1_2
\end{bmatrix} +\begin{bmatrix}
0 &\\
& 1_3
\end{bmatrix}\otimes e_{22}\otimes \begin{bmatrix}
\eta_2 1_2 & \\
& \eta_4 1_2
\end{bmatrix},
\end{split}
\end{equation} 
where $\eta_i=\pm 1$ for $i=1,...,4$. There are only eight independent (i.e. up to a global sign) possibilities. They are listed below:
\begin{equation}
\beta_1=\mathrm{id}, 
\end{equation}
\begin{equation}
\begin{split}
\beta_2&=\begin{bmatrix}
1_2 & \\
& 0_2
\end{bmatrix} \otimes e_{11} \otimes 1_4 +\begin{bmatrix}
0_2 &\\
& 1_2
\end{bmatrix}\otimes e_{11}\otimes \begin{bmatrix}
1 & \\
& - 1_3
\end{bmatrix}+\\
&+\begin{bmatrix}
1& \\
& 0_3
\end{bmatrix}\otimes e_{22}\otimes 1_4 +\begin{bmatrix}
0 &\\
& 1_3
\end{bmatrix}\otimes e_{22}\otimes \begin{bmatrix}
 1_2 & \\
& - 1_2
\end{bmatrix},
\end{split}
\end{equation}
\begin{equation}
\beta_3 =\begin{bmatrix}
1_2 & \\
& -1_2
\end{bmatrix} \otimes e_{11} \otimes 1_4 +1_4\otimes e_{22}\otimes \begin{bmatrix}
 1_2 & \\
& -1_2
\end{bmatrix},
\end{equation}
\begin{equation}
\begin{split}
\beta_4&=\begin{bmatrix}
1_2 & \\
& 0_2
\end{bmatrix} \otimes e_{11} \otimes 1_4 +\begin{bmatrix}
0_2 &\\
& 1_2
\end{bmatrix}\otimes e_{11}\otimes \begin{bmatrix}
-1 & \\
& 1_3
\end{bmatrix}+ \\
&+\begin{bmatrix}
1& \\
& 0_3
\end{bmatrix}\otimes e_{22}\otimes \begin{bmatrix}
-1_2 & \\
&  1_2
\end{bmatrix} +\begin{bmatrix}
0 &\\
& 1_3
\end{bmatrix}\otimes e_{22}\otimes 1_4,
\end{split}
\end{equation}
\begin{equation}
\begin{split}
\beta_5&= \begin{bmatrix}
1_2 & \\
& 0_2
\end{bmatrix} \otimes e_{11} \otimes \begin{bmatrix}
 1& \\
& - 1_3
\end{bmatrix} +\begin{bmatrix}
0_2 &\\
& 1_2
\end{bmatrix}\otimes e_{11}\otimes 1_4+ \\
&+\begin{bmatrix}
1& \\
& 0_3
\end{bmatrix}\otimes e_{22}\otimes 1_4 +\begin{bmatrix}
0 &\\
& 1_3
\end{bmatrix}\otimes e_{22}\otimes \begin{bmatrix}
- 1_2 & \\
&  1_2
\end{bmatrix},
\end{split}
\end{equation}
\begin{equation}
\beta_6= \begin{bmatrix}
1_2 & \\
& -1_2
\end{bmatrix} \otimes e_{11} \otimes \begin{bmatrix}
1 & \\
& - 1_3
\end{bmatrix} +\begin{bmatrix}
1 & \\
& - 1_3
\end{bmatrix}\otimes e_{22}\otimes \begin{bmatrix}
1_2 & \\
& - 1_2
\end{bmatrix},
\end{equation}
\begin{equation}
\beta_7 =1_4 \otimes e_{11} \otimes \begin{bmatrix}
1 & \\
& -1_3
\end{bmatrix} +\begin{bmatrix}
1 &\\
& -1_3
\end{bmatrix}\otimes e_{22}\otimes 1_4,
\end{equation}
\begin{equation}
\begin{split}
\beta_8&=\begin{bmatrix}
1_2 & \\
& 0_2
\end{bmatrix} \otimes e_{11} \otimes \begin{bmatrix}
1 & \\
& - 1_3
\end{bmatrix} +\begin{bmatrix}
0_2 &\\
& -1_2
\end{bmatrix}\otimes e_{11}\otimes 1_4+ \\
&+\begin{bmatrix}
1& \\
& 0_3
\end{bmatrix}\otimes e_{22}\otimes \begin{bmatrix}
 1_2 & \\
& - 1_2
\end{bmatrix} +\begin{bmatrix}
0 &\\
& -1_3
\end{bmatrix}\otimes e_{22}\otimes 1_4.
\end{split}
\end{equation}
All of the above are $0$-cycles (more precisely: $\beta=\pi(1_2,0_2,\eta_1,\eta_2 1_3)J\pi(1_2,0_2,\eta_1,\eta_2 1_3)J^{-1}+\pi(0_2,1_2,\eta_3,\eta_4 1_3)J\pi(0_2,1_2,\eta_3,\eta_4 1_3)J^{-1}$ ), but only four of them are images of all non-zero elements of the algebra: $\beta_1$, $\beta_3$, $\beta_6$ and $\beta_7$. These are exactly the same cases we had in the case of the single term $0$-cycles. Moreover, an analogous computation to before shows that $\beta$s which are not of the one term type, do not allow for physically acceptable Dirac operators, basically because of its restrictions on the  $\cdots\otimes e_{11}\otimes \cdots$ and $\cdots \otimes e_{22}\otimes \cdots$ terms, i.e. the conclusion for section \ref{reducedPS} remains valid when more general $\beta$s are considered.

\subsubsection{The Standard Model}

Let us now discuss the generic case for the Standard Model. In \cite{bs} the one term case was discussed. Now, mirroring the above computation we can get the following family of possible $\beta$s:
\begin{equation}
\begin{split}
\beta &= \begin{bmatrix}
1 & \\
& 0_3
\end{bmatrix}\otimes e_{11}\otimes \begin{bmatrix}
\eta_1 & \\
& \eta_2 1_3
\end{bmatrix}+\begin{bmatrix}
0&&\\
& 1&\\
&&0_2
\end{bmatrix}\otimes e_{11}\otimes \begin{bmatrix}
\eta_3 & \\
&\eta_4 1_3
\end{bmatrix}+ \\ &+\begin{bmatrix}
0_2&\\
& 1_2
\end{bmatrix}\otimes e_{11}\otimes\begin{bmatrix}
\eta_5 &\\
&\eta_6 1_3 
\end{bmatrix}+
\begin{bmatrix}
1& \\
& 0_3
\end{bmatrix}\otimes e_{22}\otimes \begin{bmatrix}
\eta_1&&\\
&\eta_3 &\\
&& \eta_5 1_2 
\end{bmatrix}+\\&+ \begin{bmatrix}
0 &\\
& 1_3
\end{bmatrix}\otimes e_{22}\otimes \begin{bmatrix}
\eta_2 &&\\
& \eta_4 \\
&&\eta_6 1_2
\end{bmatrix},
\end{split}
\end{equation} 
where $\eta_i=\pm 1$, for $i=1,...,6$. This straightforward check shows that the only case in which $\beta$ is a $0$-cycle and allows for an extension of the Standard Model (in the previously discussed sense) is of the one term type and is exactly the same $\beta$ that prevented the existence of leptoquarks in \cite{bs}. Therefore, the validity of the conclusion therein is maintained when generalizing to allow, from the outset, for more general $\beta$s.

\section{The reduced Pati-Salam model and the Left-Right Symmetric models}
Notice that the class of reduced Pati-Salam models, which is the only allowed family of such models compatible with the existence of the pseudo-Riemannian structure $\beta$, may be considered as a generalized version of the so-called Left-Right Symmetric models (LRS). This class of models was broadly considered both from theoretical and phenomenological points of view - see e.g. \cite{pr}, \cite{mp},\cite{mp1}, \cite{bms}, \cite{phg}, \cite{sm} and \cite{sir}. In such models the gauge group is 
\begin{equation}
\mathrm{SU(2)}_R\times \mathrm{SU(2)}_L\times \mathrm{SU(3)}\times \mathrm{U(1)}_{B-L}.
\end{equation}
The chiral fermions consist of three families of quarks and leptons, and are given by
\begin{equation}
q_L=\left(1,2,3,\frac{1}{3}\right), \qquad q_R=\left(2,1,3,\frac{1}{3}\right),\qquad l_L=\left(1,2,1,-1\right), \qquad l_R=\left(2,1,1,-1\right),
\end{equation}   
where the parameters denote the quantum numbers under $\mathrm{SU(2)}_R$, $\mathrm{SU(2)}_L$, $\mathrm{SU(3)}$ and $\mathrm{U(1)}_{B-L}$ gauge groups, respectively \cite{pr}. 

The charge of a particle in such a model is defined as $Q=I_{3L}+I_{3R}+\frac{B-L}{2}$, where $I_3$ is the third component of an $\mathrm{SU(2)}$-isospin. 

The Left-Right Symmetric models were also considered from the point of view of noncommutative geometry, initially as possibly interesting examples for the Connes-Lott scheme, but later on also as possible extensions of the Standard Model --- see e.g. \cite{gir},\cite{hj},\cite{is} and \cite{ok}. The main interest was in the determination of whether, in this framework, such models provide a mechanism to explain the origin of parity symmetry breaking. In \cite{gir} it was argued that in the almost-commutative Yang-Mills-Higgs models, parity cannot be spontaneously broken. This followed from the requirement that Poincar\'{e} duality must be satisfied.  

The family of reduced Pati-Salam models generalizes both the Left-Right Symmetric Models and also the Chiral Electromagnetism Model \cite{gir}. The latter contains the $\mathrm{U(1)}_R\times \mathrm{U(1)}_L$ gauge group instead of the $\mathrm{SU(2)}$ ones. This theory played the role of a toy model for the application of the Connes-Lott scheme to LRS theories.

\section{Conclusions and outlook}

We have discussed a role which may be played by the existence of pseudo-Riemannian structures for the finite spectral triples associated with the family of Pati-Salam models. We have shown that their existence reduces the algebra to $\mathbb{H}_R\oplus\mathbb{H}_L\oplus \mathbb{C}\oplus M_3(\mathbb{C})$. Despite the fact that the existence of the additional grading as the shadow of a pseudo-Riemannian structure does not determine the Dirac operator uniquely, we have a huge reduction of the possible choices. 

We would like to stress that due to such a reduction, the family of Left-Right Symmetric models appears to be the interesting one. Due to the trend of relaxing some of the axioms of spectral geometry, it may be interesting to revisit the previous results for these classes of models in such a framework. We postpone this for future research.

\acknowledgments

The authors would like to thank A.Sitarz for his careful reading of the manuscript and helpful comments.



\begin{thebibliography}{99}

\bibitem{pr} A.~Patra and S.K.~Rai, {\it A Lepton-specific Universal Seesaw Model with Left-Right Symmetry}, Phys.Rev. D98 (2018) no.1, 015033 arXiv:1711.00627 [hep-ph] 

\bibitem{sar} U.~Sarkar, {\it Particle and Astroparticle Physics}, CRC Press. p. 429. ISBN 978-1-58488-931-1

\bibitem{jarosik} N.~Jarosik et al. {\it Seven-year Wilson microwave anisotropy probe (WMAP) observations: Sky maps, systematic errors, and basic results}. Astrophysical Journal Supplement. 192 (2):14. (2011) arXiv:1001.4744

\bibitem{ps} J.C.~Pati and A.~Salam, {\it Lepton number as the fourth 'color'}, Phys. Rev. D10 (1974) 275--289

\bibitem{mp} R.N.~Mohapatra and J.C.~Pati, {\it Left-right gauge symmetry and an "isoconjugate" model of CP violation}, Phys. Rev. D11 (1975), 566--571

\bibitem{mp1} R.N.~Mohapatra and J.C.~Pati, {\it 'Natural' left-right symmetry}, Phys. Rev. D11 (1975), 2558--2561

\bibitem{connes1} A.~Connes, {\it Noncommutative Geometry and reality}, J. Math. Phys. 36 (1995), 6194--6231

\bibitem{connes2} A.~Connes, {\it Gravity coupled with matter and foundation of non-commutative geometry}, Commun. Math. Phys. 182 (1996), 155--176

\bibitem{connes3} A.H.~Chamseddine, A.~Connes, M.~Marcolli, {\it Gravity and the standard model with neutrino mixing}, Adv. Theor. Math. Phys. 11, 991--1089 (2007)

\bibitem{bs} A.~Bochniak and A.~Sitarz, {\it Finite Pseudo-Riemannian Spectral Triples and The Standard Model}, Phys.Rev. D97 (2018) no.11, 115029, DOI: 10.1103/PhysRevD.97.115029

\bibitem{amst} U.~Aydemir, D.~Minic, C.~Sun and T.~Takeuchi, {\it Pati-Salam unification from noncommutative geometry and the $\mathrm{TeV}$-scale $W_R$ boson}, Int. J. Mod. Phys. A 31 (2016), 1550223

\bibitem{amst1} U.~Aydemir, D.~Minic, C.~Sun and T.~Takeuchi, {\it The $750\mathrm{GeV}$ diphoton excess in unified $SU(2)_L\times SU(2)_R\times SU(4)$ models from noncommutative geometry}, Mod. Phys. Lett. A 31 (2016) 1650101

\bibitem{amst2} U.~Aydemir, D.~Minic, C.~Sun and T.~Takeuchi, {\it $B$-decays anomalies and scalar leptoquarks in unified Pati-Salam models from noncommutative geometry}, J. High Energ. Phys. 09 (2018) 117
 
\bibitem{ccw1} A.H.~Chamseddine, A.~Connes and W.D.~van Suijlekom, {\it Beyond the Spectral Standard Model: Emergence of Pati-Salam Unification} J. High Energ. Phys. (2013) 132

\bibitem{ccw} A.H.~Chamseddine, A.~Connes and W.D.~van Suijlekom, {\it Grand Unification in the Spectral Pati-Salam Model} J. High Energ. Phys. (2015) 011

\bibitem{cw} A.H.~Chamseddine and W.D.~van Suijlekom, {\it A survey of spectral models of gravity coupled to matter}, arXiv:1904.12392 [hep-th]

\bibitem{gir} F.~Girelli, {\it Left-Right Symmetric Model in Noncommutative Geometries?}, Lett. Math. Phys. 57: 7--17, 2001

\bibitem{hj} B.E.~Hanlon and G.C.~Joshi, {\it A Noncommutative Geometric Approach to Left-Right Symmetric Weak Interactions}, Lett. Math. Phys. 27: 105--115, 1993

\bibitem{is} B.~Iochum and T.~Sch\"{u}cker, {\it A Left-Right Symmetric Model a la Connes-Lott}, Lett. Math. Phys. 32, 153–-166 (1994)

\bibitem{ok} Y.~Okumura, {\it Left-right symmetric gauge model in a noncommutative geometry on $M_4\times Z_4$}, Progr. Theor. Phys. 94(4), (1995) 589--606

\bibitem{ccw2} A.H.~Chamseddine, A.~Connes and W.D.~van Suijlekom, {\it Inner Fluctuations in Noncommutative Geometry without the first order condition} J. Geom. Phys. 73 (2013) 222--234

\bibitem{das} L.~D\k{a}browski, F.~D'Andrea and A.~Sitarz, {\it The Standard Model in noncommutative geometry: fundamental fermions as internal forms}, Lett. Math. Phys. (2018) 108: 1323. https://doi.org/10.1007/s11005-017-1036-x

\bibitem{da} L.~D\k{a}browski and F.~D'Andrea, {\it The Standard Model in Noncommutative Geometryand Morita equivalence}, J. Noncommut. Geom. 10 (2016), 551--578. doi: 10.4171/JNCG/242 

\bibitem{bms} B.~Brahmachari, E.~Ma and U.~Sarkar, {\it Left-Right Model of Quark and Lepton Masses without a Scalar Bidoublet}, Phys. Rev. Lett. 91, 011801 (2003) 

\bibitem{phg} P.H.~Gu, {\it Left-right symmetric model for neutrino masses, baryon asymmetry, and dark matter}, Phys. Rev. D81, 095002 (2010)

\bibitem{sm} G.~Senjanovic and R.N.~Mohapatra {\it Exact left-right symmetry and spontaneous violation of parity}, Phys. Rev. D11 (1975), 1502--1505

\bibitem{sir} F.~Siringo, {\it Symmetry breaking of the symmetric left-right model without a scalar bidoublet}, Eur. Phys. J. C 32 (2004), 555--559

\bibitem{nu} S.~Naka and E.~Umezawa, {\it An approach to Electroweak Interactions based on Noncommutative Geometry}, Progr. Theor. Phys. 92(1), (1992) 189--204







\end{thebibliography}
\end{document}